\newdimen\figrasterwd
\newcounter{onlinemeth}
\newcommand{\onlinemeth}[1]{%
  \section*{#1}%
  \refstepcounter{onlinemeth}%
  \def\cref@currentlabel{[onlinemeth][\arabic{onlinemeth}][]#1}%
  \def\@currentlabelname{#1}%
}
\newtheorem{theorem}{Theorem}[section]
\newtheorem{claim}[theorem]{Claim}
\newtheorem{definition}{Definition}[section]
\newtheorem*{remark*}{Remark}
\newtheorem{property}[theorem]{Property}
\crefname{proof}{proof}{proofs}
\Crefname{proof}{Proof}{Proofs}
\crefname{property}{property}{properties}
\Crefname{property}{Property}{Properties}
\crefname{suppmethod}{Supplemental Methods}{Supplemental Methods}
\crefname{methods}{Methods}{Methods}
\Crefname{methods}{Methods}{Methods}
\crefname{sfig}{S.}{S.}
\crefname{cap}{Fig.}{Fig.}
\title{Dimensionality Reduction of Longitudinal 'Omics Data using Modern Tensor Factorization}
\author{
    Uria Mor$^{1,2}$\\
    \and 
    Yotam Cohen$^{1}$\\
    \and 
    Rafael Valdés-Mas$^{1}$\\
    \and
    Denise Kviatcovsky$^{1}$\\
    \and
    Eran Elinav$^{1,3,\#}$\\
    \and
    Haim Avron$^{2,\#}$
    }
\date{}
\begin{document}
\newcommand{\noun}[1]{\textsc{#1}\xspace}
\newcommand{\bs}[1]{\ensuremath{\boldsymbol{#1}}}
\newcommand{\argmin}{\text{\ensuremath{\arg\min}}}
\newcommand{\argmax}{\text{\ensuremath{\arg\max}}}
\newcommand{\st}{\,\,\,\text{s.t.}\,\,\,}
\newcommand{\xx}{\ensuremath{\times}}
\newcommand{\code}[1]{\codex{#1}}
\newcommand{\pinv}{\ensuremath{+}}
\newcommand{\mpn}{\ensuremath{m \xx p \xx n}}
\newcommand{\pmn}{\ensuremath{p \xx m \xx n}}

\newcommand{\tens}[1]{\ensuremath{\bs{\mathscr #1}}}
\newcommand{\tA}{\tens{A}}
\newcommand{\tAt}{\tA^{\T}}
\newcommand{\btA}{\bar{\tens{A}}}
\newcommand{\ttA}{\tilde{\tens{A}}}
\newcommand{\thA}{\widehat{\tA}}
\newcommand{\thAt}{\thA^{\T}}
\newcommand{\tB}{\tens{B}}
\newcommand{\tBt}{\tB^{\T}}
\newcommand{\thB}{\widehat{\tB}}
\newcommand{\thBt}{\thB^{\T}}
\newcommand{\tR}{\tens{R}}
\newcommand{\thR}{\widehat{\tR}}
\newcommand{\tRt}{\tR^{\T}}
\newcommand{\tM}{\tens{M}}
\newcommand{\thM}{\widehat{\tM}}
\newcommand{\tMt}{\tM^{\T}}

\newcommand{\tV}{\tens{V}}
\newcommand{\tVt}{\tV^{\T}}
\newcommand{\thV}{\widehat{\tV}}
\newcommand{\thVt}{\thV^{\T}}
\newcommand{\tU}{\tens{U}}
\newcommand{\tUt}{\tU^{\T}}
\newcommand{\thU}{\widehat{\tU}}
\newcommand{\thUt}{\thU^{\T}}
\newcommand{\tX}{\tens{X}}
\newcommand{\thX}{\widehat{\tX}}
\newcommand{\tY}{\tens{Y}}
\newcommand{\tYt}{\tY^{\T}}
\newcommand{\thY}{\widehat{\tY}}
\newcommand{\thYt}{\thY^{\T}}

\newcommand{\qprm}{\ensuremath{q^{\prime}}}

\newcommand{\tS}{\tens{S}}
\newcommand{\tSt}{\tS^{\T}}
\newcommand{\thS}{\widehat{\tS}}
\newcommand{\thSt}{\thS^{\T}}
\newcommand{\teJ}{\tens{J}}
\newcommand{\theJ}{\widehat{\teJ}}
\newcommand{\tI}{\tens{I}}
\newcommand{\thI}{\widehat{\tI}}
\newcommand{\tC}{\tens{C}}
\newcommand{\tCt}{\tC^{\T}}
\newcommand{\thC}{\widehat{\tC}}
\newcommand{\thCt}{\thC^{\T}}
\newcommand{\tG}{\tens{G}}
\newcommand{\tGt}{\tG^{\T}}
\newcommand{\thG}{\widehat{\tG}}

\newcommand{\tE}{\tens{E}}
\newcommand{\thE}{\widehat{\tE}}
\newcommand{\tQ}{\tens{Q}}
\newcommand{\tQt}{\tQ^{\T}}
\newcommand{\thQ}{\widehat{\tQ}}
\newcommand{\thQt}{\thQ^{\T}}
\newcommand{\hsigma}{\ensuremath{\hat{\sigma}}}
\newcommand{\tZ}{\tens{Z}}
\newcommand{\tZt}{\tZ^{\T}}
\newcommand{\thZ}{\widehat{\tZ}}
\newcommand{\thZt}{\thZ^{\T}}

\newcommand{\tW}{\tens{W}}
\newcommand{\tWt}{\tW^{\T}}
\newcommand{\thW}{\widehat{\tW}}
\newcommand{\thWt}{\thW^{\T}}

\newcommand{\tP}{\tens{P}}
\newcommand{\tPt}{\tP^{\T}}
\newcommand{\thP}{\widehat{\tP}}
\newcommand{\thPt}{\thP^{\T}}

\newcommand{\upbb}[2]{\ensuremath{{#1}^{(#2)}}}

\newcommand{\tWh}{\upbb{\tW}{h}}
\newcommand{\tWht}{( \tWh )^{\T}}
\newcommand{\thWh}{\widehat{\upbb{\tW}{h}}}
\newcommand{\thWht}{(\thWh)^{\T}}

\newcommand{\tCh}{\upbb{\tC}{h}}
\newcommand{\tCht}{(\tCh)^{\T}}
\newcommand{\thCh}{\widehat{\upbb{\tC}{h}}}
\newcommand{\thCht}{(\thCh)^{\T}}

\newcommand{\tWl}{\upbb{\tW, \ell}}
\newcommand{\tWlt}{( \tWh )^{\T}}
\newcommand{\thWl}{\widehat{\upbb{\tW}{\ell}}}
\newcommand{\thWlt}{(\thWh)^{\T}}

\newcommand{\tCl}{\upbb{\tC}{\ell}}
\newcommand{\tClt}{(\tCh)^{\T}}
\newcommand{\thCl}{\widehat{\upbb{\tC}{\ell}}}
\newcommand{\thClt}{(\thCh)^{\T}}

\newcommand{\qb}{Q_{\tB}}
\newcommand{\qv}{Q_{\tV}}
\newcommand{\pb}{P_{\tB}}
\newcommand{\tpb}{\tilde{P}_{\tB}}
\newcommand{\pv}{P_{\tV}}
\newcommand{\rbb}{R_{\matb}}
\newcommand{\rvv}{R_{\matr}}

\newcommand{\mat}[1]{\ensuremath{\mathbf{#1}}}
\newcommand{\matA}{\mat{A}}
\newcommand{\matX}{\mat{X}}
\newcommand{\matXt}{\matX^{\T}}
\newcommand{\matAt}{\matA^{\T}}
\newcommand{\bmatA}{\bar{\matA}}
\newcommand{\bmatAt}{\bmatA^{\T}}
\newcommand{\mata}{\mat{a}}
\newcommand{\matV}{\mat{V}}
\newcommand{\matVt}{\ensuremath{\matV^{\T}}}
\newcommand{\matU}{\mat{U}}
\newcommand{\matS}{\mat{S}}
\newcommand{\matSt}{\ensuremath{\matS^{\T}}}
\newcommand{\mats}{\mat{s}}
\newcommand{\hmats}{\hat{\mats}}
\newcommand{\matr}{\mat{r}}
\newcommand{\maty}{\mat{y}}
\newcommand{\amatr}{\vec{\matr}}

\newcommand{\matZ}{\mat{Z}}
\newcommand{\matZt}{\ensuremath{\matZ^{\T}}}
\newcommand{\matY}{\mat{Y}}
\newcommand{\matYt}{\ensuremath{\matY^{\T}}}

\newcommand{\matM}{\mathbf{M}}
\newcommand{\matMt}{\matM^{\T}}
\newcommand{\matW}{\mathbf{W}}
\newcommand{\matWt}{\ensuremath{\matW^{\T}}}
\newcommand{\T}{\mat{T}}

\renewcommand{\u}{\mat{u}}
\newcommand{\ut}{\ensuremath{\u^{\T}}}
\newcommand{\hu}{\ensuremath{\hat{\u}}}
\renewcommand{\v}{\mat{v}}
\newcommand{\hv}{\ensuremath{\hat{\v}}}
\newcommand{\rrho}{\ensuremath{\bs{\rho}}}
\newcommand{\pphi}{\ensuremath{\bs{\varphi}}}
\newcommand{\w}{\mat{w}}
\newcommand{\wt}{\ensuremath{\w^{\T}}}
\newcommand{\x}{\mat{x}}
\newcommand{\p}{\mat{p}}
\newcommand{\y}{\mat{y}}
\newcommand{\bmaty}{\bar{\y}}
\newcommand{\bmatx}{\bar{\x}}
\newcommand{\z}{\mat{z}}
\newcommand{\matB}{\mat{B}}
\newcommand{\matI}{\mat{I}}
\newcommand{\matBt}{\matB^{\T}}

\newcommand{\matb}{\mat{b}}
\newcommand{\matbt}{\matb^{\T}}
\newcommand{\mate}{\mat{e}}
\newcommand{\matet}{\mate^{\T}}

\newcommand{\bsc}{\bs{c}}
\newcommand{\bsct}{\bsc^{\T}}

\newcommand{\tlX}{\tilde{X}}
\newcommand{\tlY}{\tilde{Y}}

\newcommand{\RR}{\mathbb{R}}
\newcommand{\CC}{\mathbb{C}}
\newcommand{\FF}{\mathbb{F}}

\newcommand{\dotp}[1]{\langle #1 \rangle}
\newcommand{\dotps}[1]{\dotp{#1}^2}
\newcommand{\FDot}[1]{\dotp{#1}_{F}}
\newcommand{\FDotS}[1]{\dotps{#1}_{F}}
\newcommand{\TNorm}[1]{\|#1\|_{2}}
\newcommand{\FNorm}[1]{\|#1\|_{F}}
\newcommand{\NNorm}[1]{\|#1\|_{*}}
\newcommand{\FNormS}[1]{\FNorm{#1}^2}
\newcommand{\TNormS}[1]{\TNorm{#1}^2}
\newcommand{\trace}[1]{\ensuremath{\operatorname{Tr}(#1)}}
\newcommand{\ftr}[1]{\ensuremath{\operatorname{f-Tr}(#1)}}

\newcommand{\clr}{\noun{clr}}
\newcommand{\rclr}{\noun{rclr}}
\newcommand{\tsvdm}{\textsc{tsvdm}\xspace}
\newcommand{\tcam}{\textsc{tcam}\xspace}
\newcommand{\tca}{\textsc{tca}\xspace}
\newcommand{\tsvdmii}{\ensuremath{t}-\textsc{svdmii}\xspace}

\newcommand{\tsub}[1]{\ensuremath{\times_{#1}}}
\newcommand{\tsM}{\ensuremath{\tsub{3}\matM}}
\newcommand{\tsMinv}{\ensuremath{\tsub{3}\matM^{-1}}}
\newcommand{\tprod}[1]{\star_{{\scriptscriptstyle{#1}}}}
\newcommand{\Mprod}{\tprod{\matM}}
\newcommand{\mm}{\tprod{\matM}}
\newcommand{\muni}{\ensuremath{\Mprod{\textnormal{-unitary}}}\xspace}
\newcommand{\morth}{\ensuremath{\Mprod{\textnormal{-orthogonal}}}\xspace}
\newcommand{\pmorth}{\ensuremath{\textnormal{pseudo }\Mprod{\textnormal{-orthogonal}}}\xspace}
\newcommand{\Pmorth}{\ensuremath{\textnormal{Pseudo }\Mprod{\textnormal{-orthogonal}}}\xspace}
\newcommand{\rnk}{\ensuremath{\operatorname{rank}}}
\newcommand{\Npr}{\ensuremath{N^{\prime}}}

\newcommand{\Mpinv}{\mathbf{+}}
\newcommand{\mmpinv}{\ensuremath{\Mprod}\textnormal{-pseudo inverse}\xspace}

\algnewcommand{\IfThenElse}[3]{
	\State \algorithmicif\ #1\ \algorithmicthen\ #2\ \algorithmicelse\ #3}
\algnewcommand\Input{\item[\textbf{Input:}]}%
\algnewcommand\algorithmicinput{\textbf{Input:}}
\algnewcommand\INPUT{\item[\algorithmicinput]}
\algnewcommand\SState{\State \hskip-1.em }
\algnewcommand\algorithmicswitch{\textbf{switch}}
\algnewcommand\algorithmiccase{\textbf{case}}
\algnewcommand\algorithmicassert{\texttt{assert}}
\algnewcommand\Assert[1]{\State \algorithmicassert(#1)}

\algnewcommand\Output{\item[\textbf{Output:}]}%

\newcommand{\samplemode}{\emph{sample mode} }
\newcommand{\featuremode}{\emph{feature mode} }
\newcommand{\omx}{{\it omics} }
\newcommand{\LFB}{LFB }

\newcommandx{\unsure}[2][1=]{\todo[linecolor=red,backgroundcolor=red!25,bordercolor=red,#1]{#2}}
\newcommandx{\change}[2][1=]{\todo[linecolor=blue,backgroundcolor=blue!25,bordercolor=blue,#1]{#2}}
\newcommandx{\info}[2][1=]{\todo[linecolor=OliveGreen,backgroundcolor=OliveGreen!25,bordercolor=OliveGreen,#1]{#2}}
\newcommandx{\improvement}[2][1=]{\todo[linecolor=Plum,backgroundcolor=Plum!25,bordercolor=Plum,#1]{#2}}
\newcommandx{\thiswillnotshow}[2][1=]{\todo[disable,#1]{#2}}

\newcommand{\COMMENTFILE}{child_docs/commentfile.tex}
\newcommand{\ecomment}[2]{\unsure{{\bf (line \number\inputlineno,#1,#2)}:\newline%
									\ExecuteMetaData[\COMMENTFILE]{#1}}\xspace}

\newcommand{\pbxstd}{Fig.1b\xspace}
\newcommand{\pbxtca}{Fig.1c\xspace}
\newcommand{\pbxfunnel}{Fig.1d\xspace}
\newcommand{\pbxbars}{Fig.1e\xspace}

\newcommand{\fibersctf}{Fig.fb\xspace}
\newcommand{\fiberstca}{Fig.1g\xspace}
\newcommand{\fibersfunnel}{Fig.1h\xspace}
\newcommand{\fiberstimeseries}{Fig.1i\xspace}
\newcommand{\fibersheatmap}{Fig.1j\xspace}

\newcommand{\ibdrocs}{Fig.2a\xspace}
\newcommand{\ibdloadings}{Fig.2b\xspace}
\newcommand{\ibdtca}{Fig.2c\xspace}
\newcommand{\ibdtimeseries}{Fig.2d\xspace}

\newcommand{\snydertca}{Fig.2e\xspace}
\newcommand{\snyderheatmap}{Fig.2f\xspace}

\maketitle

\noindent
$^1$Immunology Department, Weizmann Institute of Science, Rehovot, 7610001, Israel.\\%
$^2$School of Mathematical Sciences, Tel Aviv University, Tel Aviv 6997801, Israel.\\%
$^3$Division of Cancer-Microbiome Research, DKFZ, Heidelberg, Germany.\\%
$^{\#}$Equal last contributors.%

\vspace{2cm}

\noindent
All correspondence to:\\
Haim Avron, Ph.D.  \\
School of Mathematical Sciences, Tel Aviv University \\
Ramat Aviv, \\
Tel Aviv, Israel, 6997801 \\
(03) 640-8893 (phone) \\
haimav@tauex.tau.ac.il \\
\& \\
Eran Elinav, M.D., Ph.D. \\
Immunology Department, \\
Weizmann Institute of Science, \\
234 Herzl Street, \\
Rehovot, Israel, 7610001 \\
(08) 934-4014 (phone) \\
eran.elinav@weizmann.ac.il \\

\vspace{2cm}
\noindent
Keywords \\
Longitudinal samples, omics data, dimensionality reduction, tensor factorization

\newpage
\section*{Abstract}
Precision medicine is a clinical approach for disease prevention, detection and treatment, which considers each individual’s genetic background, environment and lifestyle. The development of this tailored avenue has been driven by the increased availability of omics methods, large cohorts of temporal samples, and their integration with clinical data. Despite the immense progression, existing computational methods for data analysis fail to provide appropriate solutions for this complex, high-dimensional and longitudinal data. In this work we have developed a new method termed TCAM, a dimensionality reduction technique for multi-way data, that overcomes major limitations when doing trajectory analysis of longitudinal omics data. Using real-world data, we show that TCAM outperforms traditional methods, as well as state-of-the-art tensor-based approaches for longitudinal microbiome data analysis. Moreover, we demonstrate the versatility of TCAM by applying it to several different omics datasets, and the applicability of it as a drop-in replacement within straightforward ML tasks.


\section*{Introduction}
Precision health and medicine aim to provide disease treatment, pre-clinical detection and prevention, while taking into consideration the individual genetic variability, environment and lifestyle. 
Recent developments in high-throughput methodologies enable the assessment of molecular entities from biological samples on a global scale at steadily decreasing costs, allowing to conduct biological and clinical studies at previously unfeasible magnitude, including the number of biological repetitions and molecules quantified ~\cite{Adlung2021}. 
A consequence of the increased availability of omics methods, is the possibility to conduct large-scale longitudinal studies prospectively following up the participants. 
In particular, longitudinal omics profiling, combined with clinical measurements, enable us to detect and understand individual changes from baseline, improving personalized health and medicine by using tailored therapies ~\cite{SchsslerFiorenzaRose2019}.

\vspace{.25cm}
\noindent
Yet, despite the surge of longitudinal multi-omics studies, the tool-set for such analyses remains limited to date, with only a handful of applicable software suitable for specific tasks ~\cite{Metwally2018, Shields-Cutler2018, Plantinga2017}. 
Recently, an impressive advancement in the use of tensor factorization methods for time series analysis emerged, allowing trajectory analysis for microbiome data~\cite{Martino2020,Delannoy-Bruno2021} as well as neural dynamics~\cite{Williams2018}. 
Generally referred to as tensor component analysis (TCA)~\cite{Williams2018}, these multiway dimensionality reduction methods for omics data are based on CANDECOMP/PARAFAC (CP) factorization~\cite{Hitchcock1927,Harshman1970}, which dramatically limits the ability to apply machine learning (ML) algorithms, as it does not allow for straightforward mapping of unseen data points to the reduced space.
In addition, CP-based TCA requires choosing the number of components (dimensions) to be considered, since different choices may result in significantly different transformations of the data, additional uncertainties in analyzing complex information are introduced.

\noindent
Here we present \tcam{}, a new method for dimensionality reduction which provides answers the unmet need of trajectory analysis of longitudinal omics data. 
Our novel method is based on a cutting-edge mathematical framework (the M-product between tensor), which allows for a natural generalization of the notion of singular value decomposition (SVD) for matrices ($2^{nd}$ order tensors) to higher order tensors~\cite{Kilmer}. 
We show that \tcam{} outperforms traditional methods, as well as  recent - microbiome specific - tensor factorization methods for longitudinal microbiome data analysis, both in identifying distinct trajectories between different phenotypic groups, and in highlighting significant temporal variation in bacterial entities.
Furthermore, we demonstrate the versatility of \tcam{} by applying it to a proteomics dataset, uncovering new insights that were not disclosed in the original paper, showing that our method works not only for microbiome datasets but is also applicable for a wide array of longitudinal omics data. 
Finally, we show that in contrast to CP-based TCA methods, our methodology can also be applied for straightforward ML tasks on omics data.

\begin{figure}[H]
    \hspace*{-.25in}
    \centering
    \includegraphics[scale=.9]{./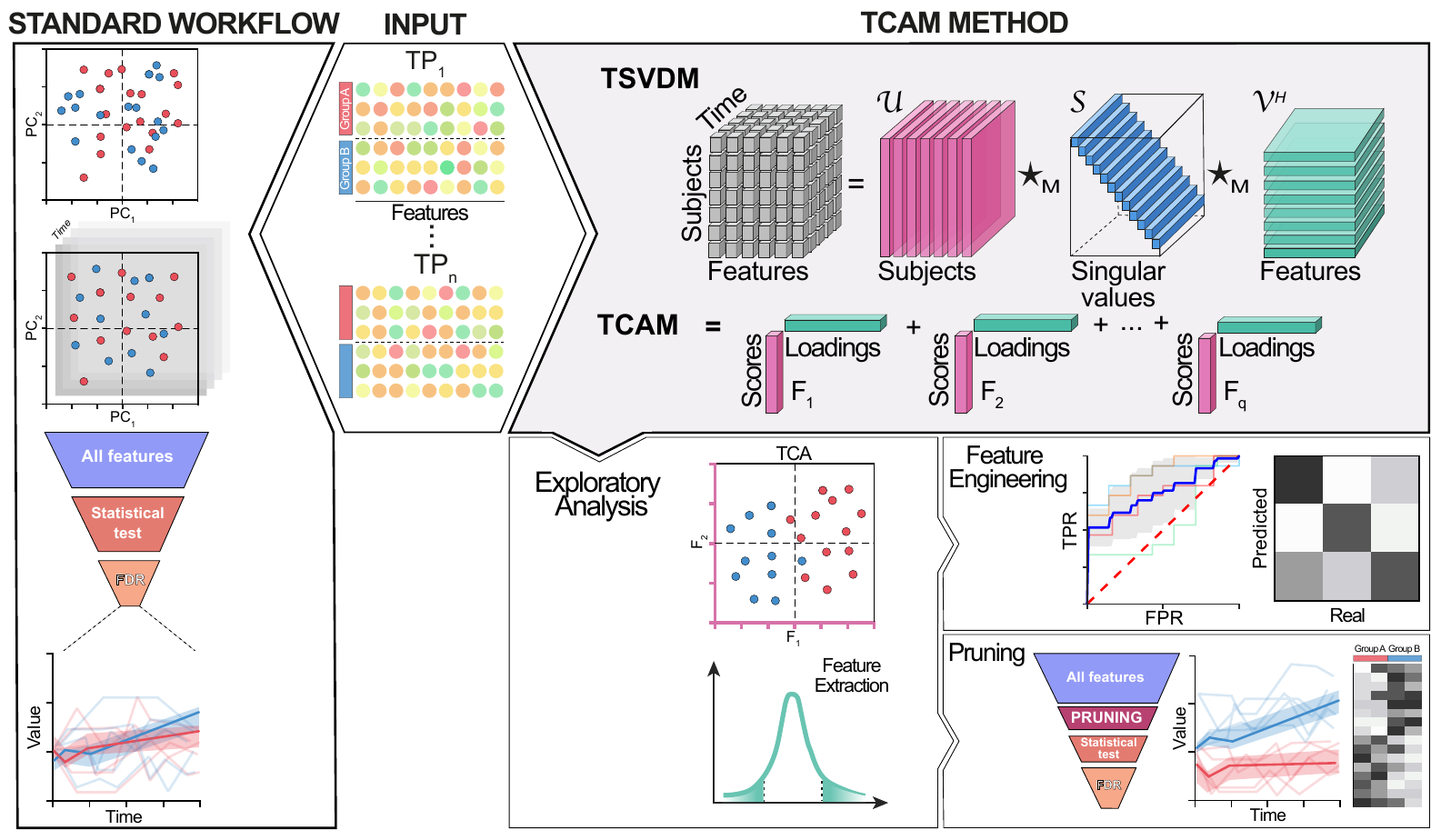}
    \caption{{\bf Illustration of settings and workflow.}  Center rhombus describes the typical data produced in a longitudinal experiment, where $p$ dimensional samples are collected from $m$ subjects across $n$ timepoints.  
    {\bf Fig.1a} shows our new method for obtaining tensor component analysis factors from an explicit rank truncated \tsvdm of the data.   
    {\bf Figures.1b-f} illustration of some prominent outcomes of \tcam application, including 
    {\bf Fig.1b} ordination plot for temporal trajectories where each point approximate the complete $p$-dimensional $n$-timepoints long time series course of each subject in the study.
    {\bf Fig.1c} Illustration of factor loadings distribution (top) and a summarised view for the highest magnitude contributors (bottom). The right-pointing arrow represents the applicability of factor loadings information to feature pruning, prior to univariate time-series analysis
    {\bf Fig.1d} (Left)  Pruning funnel, describing the \tcam based feature selection, from the initial number of features, through the selection of features corresponding to highest magnitude loadings of factors of interest, followed by application of hypothesis testing and false discovery rate correction. (Right) Illustration of an expected outcome of the strategy - discovery of feature exhibiting significantly different temporal trends between two (or more) experimental groups.
    {\bf Fig.1e,f} Illustration of ROC curve {\bf e} and confusion matrix {\bf f}, exemplifying  \tcam as a feature engineering step that seamlessly integrates in any standard ML workflow.
    {\bf Fig.1g,h} Illustrates the traditional, matrix-based longitudinal 'omics data analysis workflow. 
    {\bf Fig.1g} Illustration of a collection of ordination scatterplots (such as PCA), corresponding to one possible marginalization of the data.
    {\bf Fig.1h} Demonstration of features funnel when no pruning strategy is applied (top), and expected outcome of univariate time-series analysis when applied to all features (bottom).
    }
\end{figure}

\section*{Results}
For the purpose of this study we utilized four different longitudinal datasets~\cite{Suez2018,Schirmer2018,Sailani2020,Deehan2020}, which include 16S rDNA microbiome analysis, shotgun metagenomics and proteomics data. The first study~\cite{Suez2018} investigated the reconstitution of the gut microbiome in healthy individuals following antibiotic administration, by comparing a 21 day-long probiotics supplementation (PBX), autologous fecal microbiome transplantation (aFMT) derived from a pre-antibiotics treated sample, or spontaneous recovery (CTR), using longitudinal sampling from baseline until reconstitution (n=17). 
The second study~\cite{Deehan2020} is an interventional experiment, testing the impact of different resistant starch4 (RS4) structures on microbiome composition, in which stool samples were collected each week during a five weeks long trial (n=40). The four arms of this experiment were defined by the source of fibers: tapioca and maize groups represent sources of fermentable fibers, while potato and corn groups mostly contain fibers that are inaccessible for microbiome degradation thus, considered control groups. 
The third study~\cite{Schirmer2018} evaluated three different treatments for pediatric ulcerative colitis (UC) during a one-year follow-up, in which the stool microbiome was examined in four specific time-points (n=87). 
Finally, the fourth study~\cite{Sailani2020} constituted a unique longitudinal experimental setting of 105 healthy volunteers to explore the influence of seasons in biological processes. 
\noindent
To this aim, the authors performed immune profiling, proteomics, metabolomics, transcriptomics and metagenomics, collecting approximately 12 samples from each participant during the time course of four years.
\vspace{.25cm}

\phantomsection \label{parpbx}
\noindent
First, we sought to demonstrate the advantages of \tcam{} over the traditional methods for microbiome analysis by applying \tcam{} on the data from the antibiotic reconstitution project~\cite{Suez2018}. 
According to the original study, participants were split into three study arms (PBX, aFMT and CTR) and stool samples were collected at baseline (days 0 to 6), antibiotics treatment (days 7 to 13) and the intervention phase (days 14 to 42). 
We used Principal Component Analysis (PCA) as a comparison reference for our \tcam{} method, as it constitutes a traditional gold-standard for microbiome analysis. 
Indeed, when we applied PCA to all of the time points, it resulted in a reduced representation that was highly affected by inter-individual differences~(\Cref{fig:postabx.pca.all}), while temporal intra-individual information in longitudinal data analysis is masked the inter-individual variability. The truncated representation following PCA brings little addition to the information obtained from baseline samples, as the distances between the samples across all samples are tightly correlated with the baseline distances between samples. ~(\Cref{fig:postabx.distance.reg,sfig:postabx.pca.baseline}). 
Similarly, the per-phase perspective of the data did not capture the trend of composition changes, but a mere snapshot of temporal trends~(\Cref{sfig:postabx.pca.group.BAS,sfig:postabx.pca.group.INT,sfig:postabx.pca.group.ABX}).

\noindent
In contrast, analysis by \tcam{} generated a temporally coherent representation of the data ~(\Cref{fig:postabx.tcam.factors}), with only one single point representing the full trajectory of a subject throughout the entire experiment.
Additionally, the \tcam{} factor scores approximate the true distances between trajectories ~(\cref{app:discussion}) providing an easy interpretation, which is amenable for multivariate hypothesis testing methods. 
We performed a PERMANOVA test to reveal significant differences between trajectories in the FMT and PBX groups (\Cref{fig:postabx.tcam.factors}; p<0.05), in agreement with the original findings in the study.
In addition, we highlighted the bacterial features contributing for this distinct separation between the groups, which the original study could not detect~(\Cref{fig:postabx.tcam.loadings}), with five of the most contributing bacteria for the significant separation between the groups were the actual probiotic species consumed by the PBX group~\cite{Suez2018}.
Furthermore, we harnessed the power of \tcam{} as part of a pruning strategy, by considering only top \tcam{} loadings for a univariate linear mixed effect model (lmer, methods). Using our pruning strategy, we managed to discover twenty-three new bacterial features that significantly differ between the groups (two of these are probiotic species, ~\Cref{fig:postabx.timeser.tcam}), with an overlap of eight species discovered with and without pruning (\Cref{fig:postabx.timeser.mutual}). 
The pruning strategy failed in detection of three species, that otherwise were discovered (\Cref{fig:postabx.timeser.all,fig:postabx.bars}).

\begin{figure}[H]
    \hspace*{-.25in}
    \centering
    \begin{subfigure}[t]{0.31\textwidth}
        \centering
        \caption{\label[cap]{fig:postabx.pca.all}}
        \includegraphics[scale=.8]{./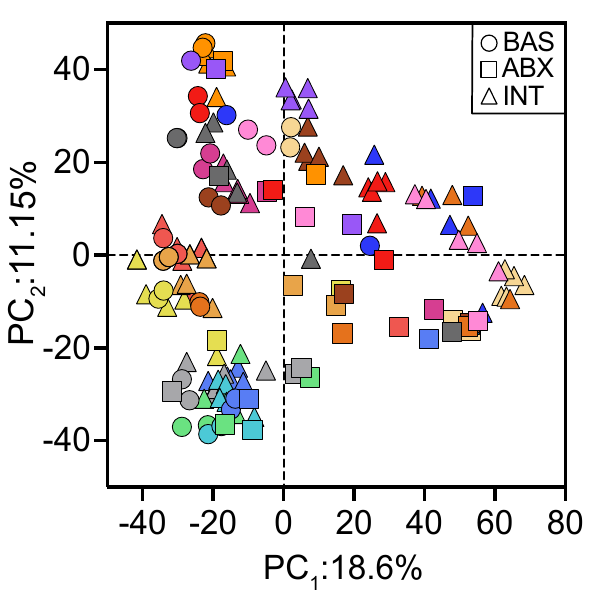}
    \end{subfigure}\hfill%
    \begin{subfigure}[t]{0.31\textwidth}
        \centering
        \caption{\label[cap]{fig:postabx.distance.reg}}
        \includegraphics[scale=.8]{./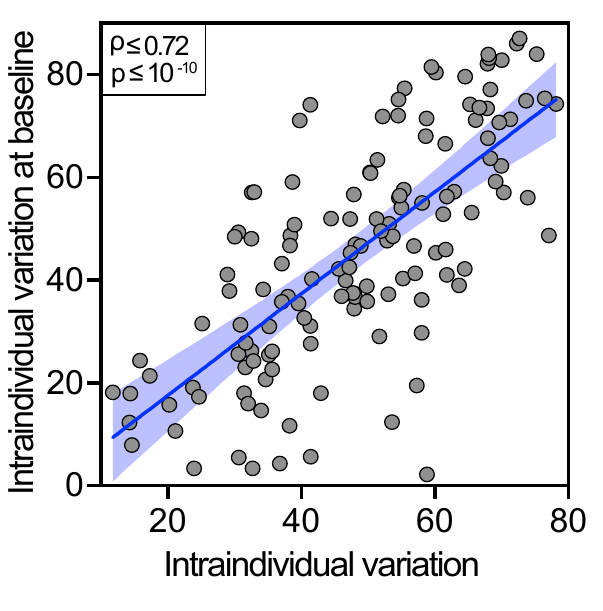}
    \end{subfigure}\hfill%
    \begin{subfigure}[t]{0.31\textwidth}
        \centering
        \caption{\label[cap]{fig:postabx.tcam.factors}}
        \includegraphics[scale=.8]{./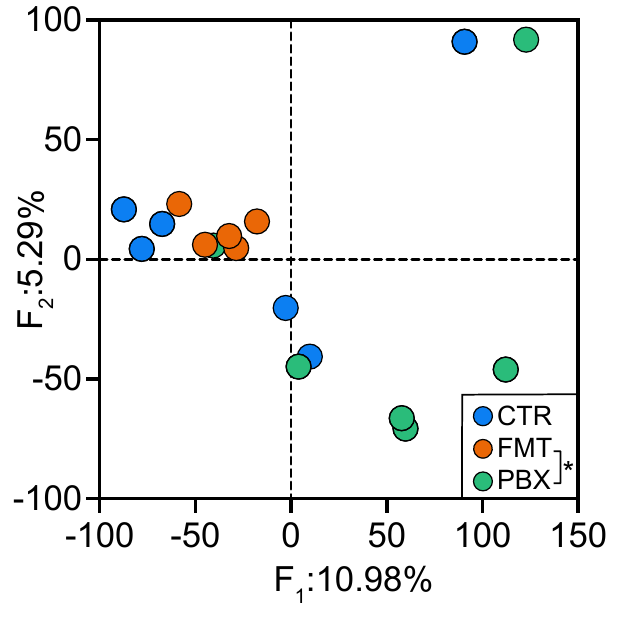}
    \end{subfigure}
    \begin{subfigure}[t]{0.24\textwidth}
        \centering
        \caption{\label[cap]{fig:postabx.tcam.loadings}}
        \includegraphics[scale=.8]{./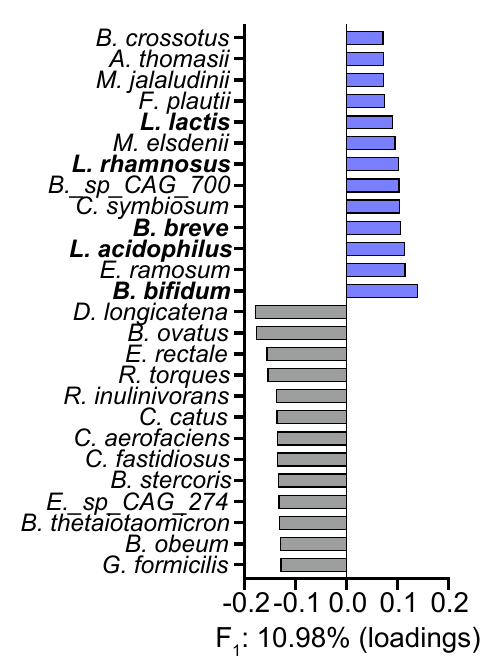}
    \end{subfigure}\hfill%
    \hspace*{-.25in}
    \begin{subfigure}[t]{0.24\textwidth}
        \centering
        \caption{\label[cap]{fig:postabx.funnel}}
        \includegraphics[scale=.7]{./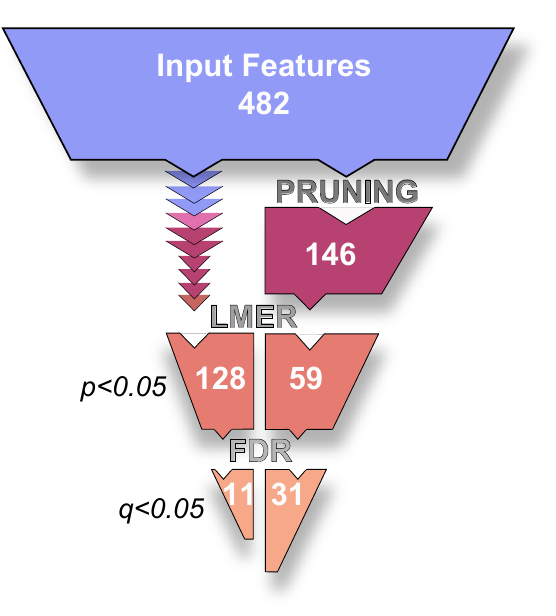}
    \end{subfigure}%
    \begin{subfigure}[t]{0.5\textwidth}
        \centering
        \caption{\label[cap]{fig:postabx.bars}}
        \includegraphics[scale=.9]{./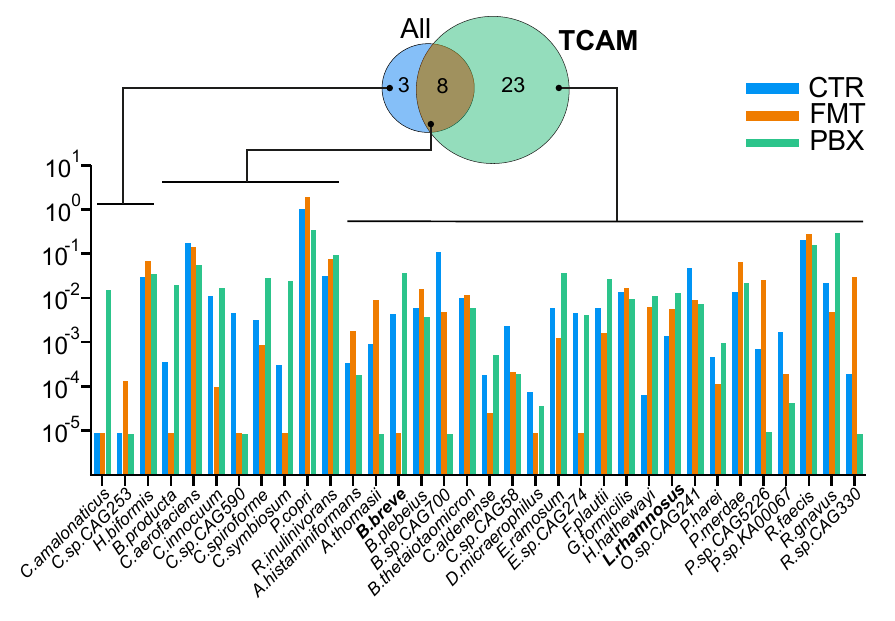}
    \end{subfigure}
    
    \caption{{\bf Comparison of \tcam with existing matrix based methods.} 
    \Cref{fig:postabx.pca.all} PCA plot of all timepoints, colored according to participant. 
    \Cref{fig:postabx.distance.reg} Regression line of mean distance between subjects at all timepoints (x) and at baseline (y). Distances computed using PC$_1$ and PC$_2$. 
    \Cref{fig:postabx.tcam.factors} Scatterplot of the first \tcam factors. Each point represents the whole trajectory of a participant. 
    \Cref{fig:postabx.tcam.loadings} Bar graph showing top 2.5\% features contributing to \(F_1\)s variation. 
    \Cref{fig:postabx.funnel} Comparison of discovery rates for univariate hypothesis testing (lmer), between naive (left) and  \tcam based pruning (right) workflow. 
    \Cref{fig:postabx.bars} Venn diagram and bar graphs. Bar show are per-subject AUC for all detected bacteria (q<0.05). 
    Venn diagram relates each bacterium to the workflow it was detected in. 
    Bars represent medians.}
\end{figure}



\phantomsection \label{par:gemelli}
\noindent
Next, we evaluated the \tcam's performance against the new state-of-the-art tensor factorization method Gemelli~\cite{Martino2020}. Unlike \tcam{}, Gemelli is designed specifically for 16S amplicon sequencing data, as it utilizes mathematical properties that are unique to such data. 
For this reason, we used the RS4 interventional dataset, comparing the effect of four different types of RS4 fiber administration, tapioca, maize, corn and potato, on the microbiome composition~\cite{Deehan2020}. 
In the original paper, the authors noticed significant changes in specific time-points in the tapioca and maize groups, but no apparent trends of changes in the microbiome composition were reported. 

\noindent
Initial analysis with Gemelli, failed to identify any significant differences between trajectories of the groups (PERMANOVA; p>0.05~\Cref{fig:fibers.gemelli.factors}), however, using \tcam{}, we were able to detect a significant trajectory for the maize group compared to all other groups, but not for the tapioca group (PERMANOVA; p<0.05,~\Cref{fig:fibers.tcam}). 
We then applied our pruning strategy, as previously described, and managed to identify four distinct bacteria featuring a statistically significant trend throughout time, not detected in the traditional methods or by Gemelli (\Cref{fig:fibers.gemelli.funnel,fig:fibers.timeseries}). 
Moreover, using the top loadings of \(F_3\) (see methods), we highlighted additional features that did not pass the FDR-correction significance threshold in the previous strategy, demonstrating patterns of increasing bacteria in the form of {\it Lachnospiraceae} (p<0.05) in the maize group and {\it P. distasonis} (p<0.05)  in the tapioca group (\Cref{fig:fibers.heatmap}).

\noindent
Taken all together, we can determine that \tcam{} outperformed both traditional and the up-to-date longitudinal analysis workflow for time-series longitudinal trajectory analysis, identifying signals in temporal multivariate complex data that the aforementioned methods could not. Additionally, \tcam{} performed exceptionally well as a feature pruning strategy, which reduces the features of interest and enables a sensitive detection of significant features.


\begin{figure}[H]
    \hspace*{-.25in}
    \centering
    \begin{subfigure}[t]{0.31\textwidth}
        \centering
        \caption{\label[cap]{fig:fibers.gemelli.factors}}
        \includegraphics[scale=.8]{./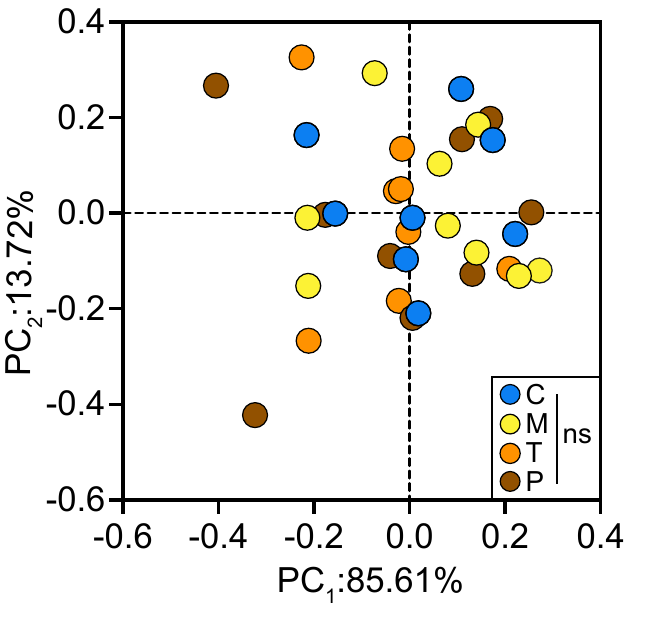}
    \end{subfigure}\hfill%
    \begin{subfigure}[t]{0.31\textwidth}
        \centering
        \caption{\label[cap]{fig:fibers.tcam}}
        \includegraphics[scale=.8]{./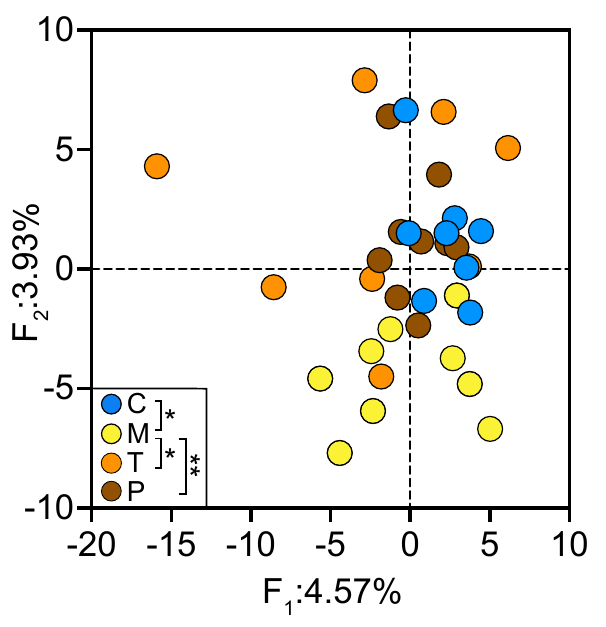}
    \end{subfigure}\hfill%
    \begin{subfigure}[t]{0.31\textwidth}
        \centering
        \caption{\label[cap]{fig:fibers.gemelli.funnel}}
        \includegraphics[scale=.8]{./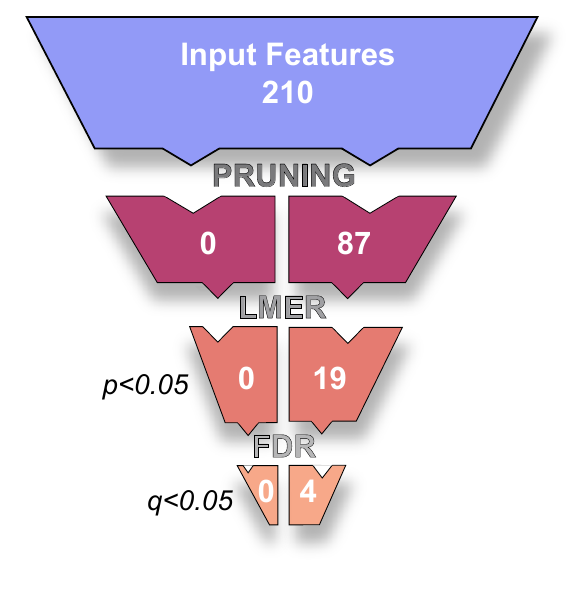}
    \end{subfigure}
    \begin{subfigure}[t]{0.62\textwidth}
        \centering
        \caption{\label[cap]{fig:fibers.heatmap}}
        \includegraphics[scale=.8]{./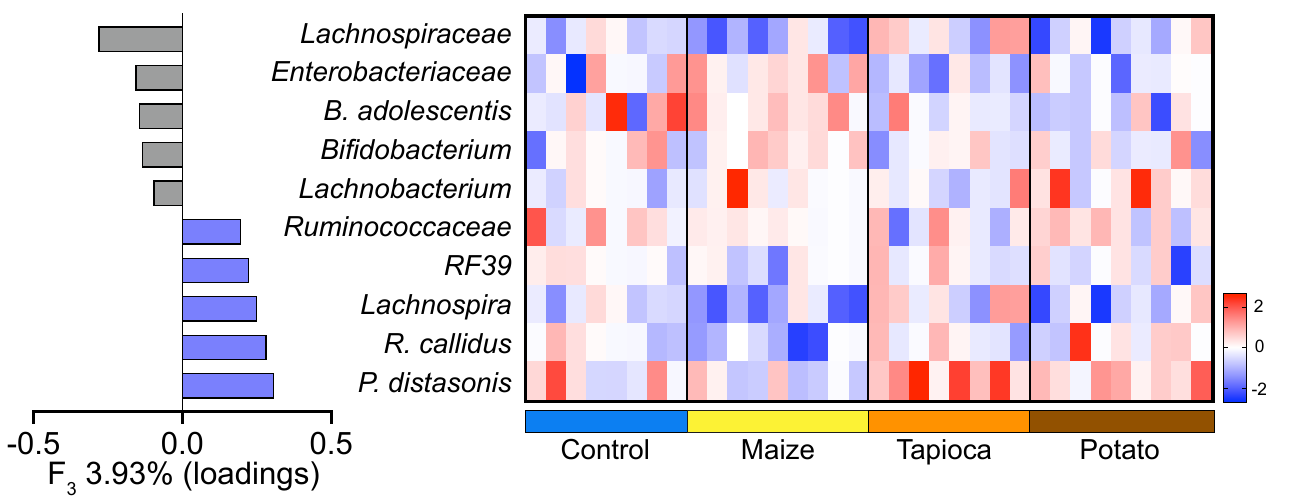}
    \end{subfigure}\hfill%
    \begin{subfigure}[t]{0.31\textwidth}
        \hfill
    \end{subfigure}
    \hspace*{-.25in}
    \begin{subfigure}[t]{.98\textwidth}
        \centering
        \caption{\label[cap]{fig:fibers.timeseries}}
        \includegraphics[scale=.95]{./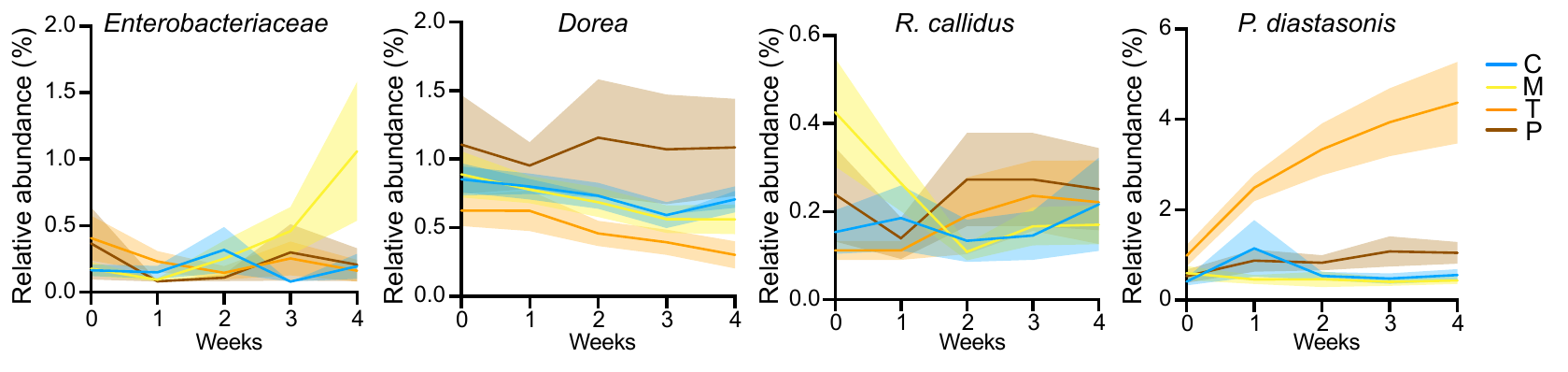}
    \end{subfigure}\hfill

    \caption{{\bf Comparison of \tcam with existing tensor based methods.}
    \Cref{fig:fibers.gemelli.factors} Ordination scatter plot of the data from~\cite{Deehan2020}, based on the Gemelli method~\cite{Martino2020} Inset: pairwise PERMANOVA.
    \Cref{fig:fibers.gemelli.funnel} Univariate statistical testing funnel comparison between Gemelli (left) and \tcam (right) pruning strategies.
    \Cref{fig:fibers.tcam} \tcam ordination graph for data of~\cite{Deehan2020}, ; Inset: Pairwise PERMANOVA
    \Cref{fig:fibers.heatmap} Barplot with top and bottom 2.5\% loadings for F3 (left). Heatmap representing per-subject AUC (log scale) for F3 top and bottom 2.5\% contributing bacteria (right).
    \Cref{fig:fibers.timeseries} Time series graphs describing all significant bacteria when applying \tcam pruning strategy. }
\end{figure}

\vspace{.25cm}
\phantomsection \label{par:snyder}
\noindent
We turned to assess \tcam's applicability to general omics data, distinct from metagenomics, by applying our method to the proteomics data set~\cite{Sailani2020}. 
In this study, twelve samples were collected from 105 healthy participants during three-years follow-up (collection of one sample every three months), and tested the seasonal trends of the microbiome, transcriptome, metabolome and proteome. 
We utilized a subset of this study’s data, focusing only on proteomics of individuals featuring information related to their insulin sensitivity (IS) or insulin resistance (IR). 
Instead of addressing seasonal patterns, which were the main focus of the original work, we addressed the differences between proteome trajectories of the IR and IS groups throughout the three-years follow-up. 
Using \tcam{}, we detected a significant separation between the two groups based on \(F_1\) (t-test; p<0.05), indicating a different trajectory of these two population across time (\Cref{fig:iris.tcam}).
Similar to our analysis framework above, we turned to the top loadings contributing to this signature (\Cref{fig:iris.heatmap}). 
Among the top ranked proteins, we could easily notice angiotensinogen (AGT), which levels are associated with IS~\cite{Underwood2011}, paraoxonase-1 (PON1) which was found to reduce IR in mice~\cite{KorenGluzer2013}, apolipoprotein-3 (APOC3), highly associated with IR~\cite{Chung2015}, highlighting different trajectories for these proteins among the two groups and increasing levels of AHSG in the IS which indeed tightly associated with IS (DOI: 10.2337/diacare.29.04.06.dc05-1938) \begin{figure}[H]
    \centering
    \begin{subfigure}[t]{0.24\textwidth}
        \centering
        \caption{\label[cap]{fig:iris.tcam}}
        \includegraphics[scale=.8]{./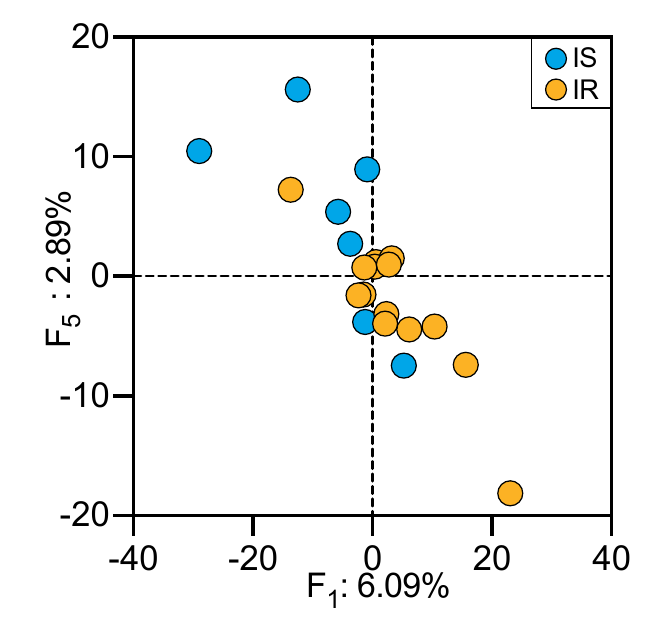}
    \end{subfigure}\hfill%
    \begin{subfigure}[t]{0.6\textwidth}
        \centering
        \caption{\label[cap]{fig:iris.heatmap}}
        \includegraphics[scale=.8]{./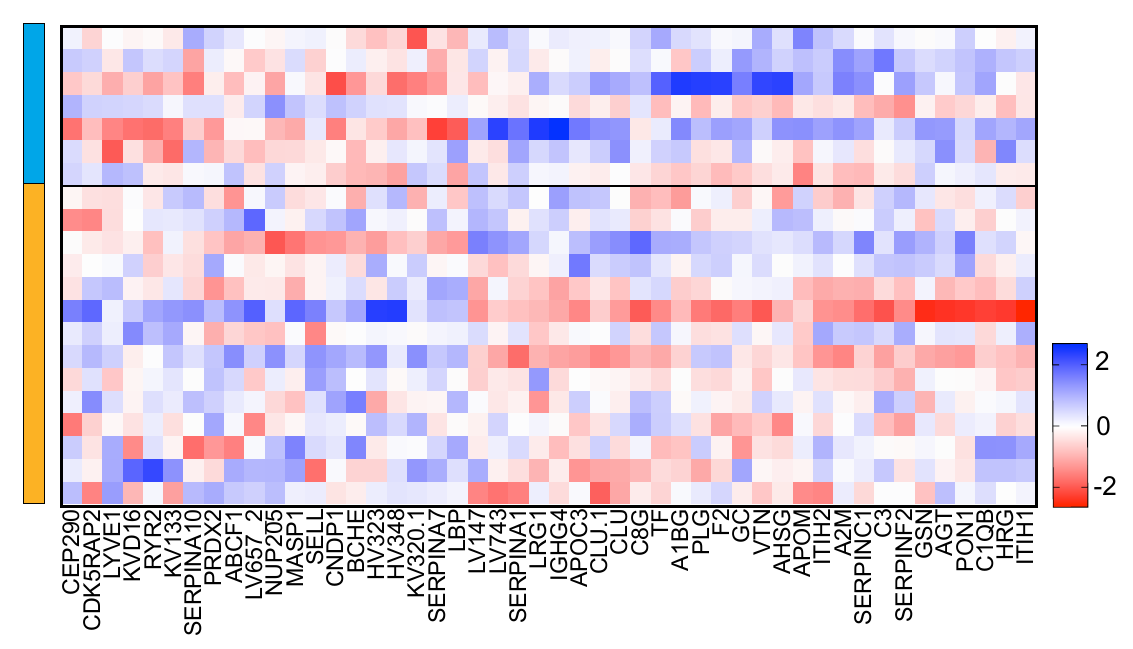}
    \end{subfigure}
    \caption{{\bf \tcam{}'s applicability to proteomics datasets.} 
    \Cref{fig:iris.tcam} Scatterplot of leading \tcam factors significantly correlated with SSPG. Points are colored according to insulin resistant (IR) and insulin sensitive (IS) information.
    \Cref{fig:iris.heatmap} Heatmap showing the sum of top and bottom 25 features contributing to the variation on \(F_1\) according to their loadings.}
\end{figure}

\vspace{.25cm}
\phantomsection \label{par:ibd}

\noindent
Finally, we demonstrated the predictive power unleashed by \tcam from longitudinal sampling in the context of supervised ML tasks such as classification.
To this aim, we utilized a 16S rDNA microbiome dataset from a study by Schirmer et. al., which includes stool samples derived from pediatric UC patients monitored for 52 weeks under three different treatments, and characterized for microbial dynamics along disease course in light of host response to each of the applied treatments~\cite{Schirmer2018}. 
When considering the complete time course, as well as, single timepoint snapshots, we found no clear separation based on the remission status of the UC patients (PERMANOVA, p>0.05;~\Cref{fig:ibd.remission.projections.tcam,fig:ibd.remission.projections.w12,fig:ibd.remission.projections.w52}), making the task of predicting the remission status using temporal microbiome data highly challenging.

\noindent
To overcome this challenge, we used \tcam{} as a feature engineering tool in the generation of a machine learning classifier. 
In short, we executed five cross-validation Multi Layer Perceptron (MLP) classifier  using the \tcam{} transformed training set data, the same transformation was used in an out-of-sample extension manner to transform the validation set. 
The MLP had shown an impressive mean Area Under The Receiver Operating Characteristic (AUROC) of (AUROC = 0.78, ~\Cref{fig:ibd.tcam.roc} )   for the classification of remission status. 
Moreover, we were able to preserve the original feature importance contribution prior to \tcam{}, thus highlighting key bacteria that differs between remission status; 
Using standard variable-importance API, we were able to pin-point specific taxa like {\it Anaerococcus} and {\it B.fragilis}  whose trajectories data made the highest contribution to the decision making process (\Cref{fig:ibd.tcam.important}),  
that could not have been identified using a per time-point comparison as done in the original paper. 
To validate the \tcam{}-based feature importance results, we then applied \tcam{} on the same dataset, while taking into account only the top \%5 ranked features  discovered for the MLP classifier pipeline. 
From this reduced view, we were able to identify clustering according to remission states (\Cref{fig:ibd.pruned.tcam}), validating the power of \tcam{} classification and the preservation of feature importance scores(\Cref{fig:abundances.important.features}).
\begin{figure}[H]
    \hspace*{-.25in}
    \centering
    \begin{subfigure}[t]{.27\textwidth}
        \centering
        \caption{\label[cap]{fig:ibd.tcam.roc}}
        \includegraphics[trim={0 0 4mm 5mm},clip,scale=.8]{./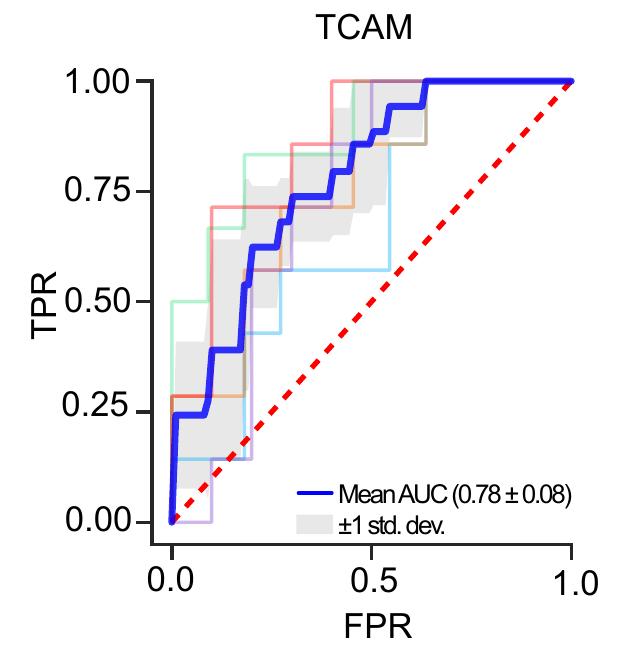}
    \end{subfigure}\hfill%
    \begin{subfigure}[t]{.35\textwidth}
        \centering
        \caption{\label[cap]{fig:ibd.tcam.important}}
        \includegraphics[scale=.81]{./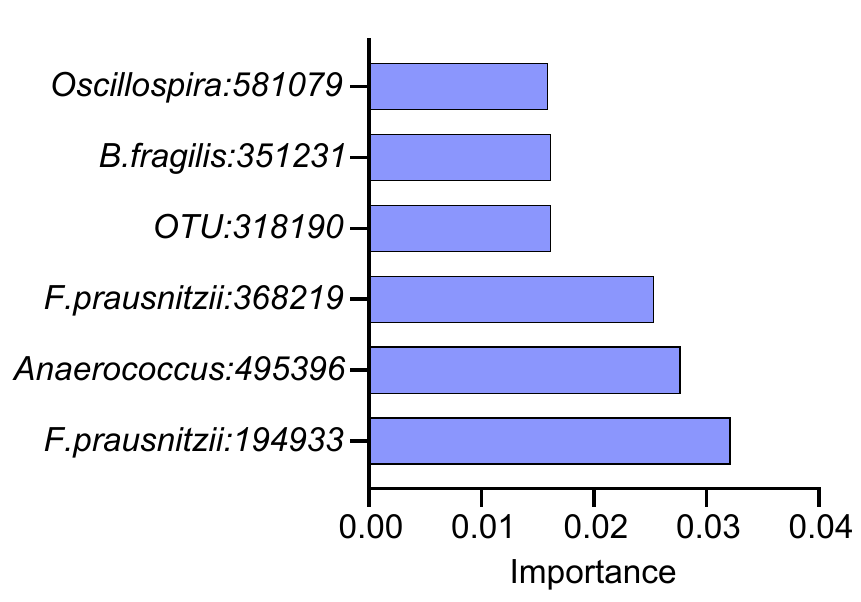}
    \end{subfigure}\hfill%
    \begin{subfigure}[t]{.27\textwidth}
        \vspace*{-2mm}
        \centering
        \caption{\label[cap]{fig:ibd.pruned.tcam}}
        \includegraphics[scale=.78]{./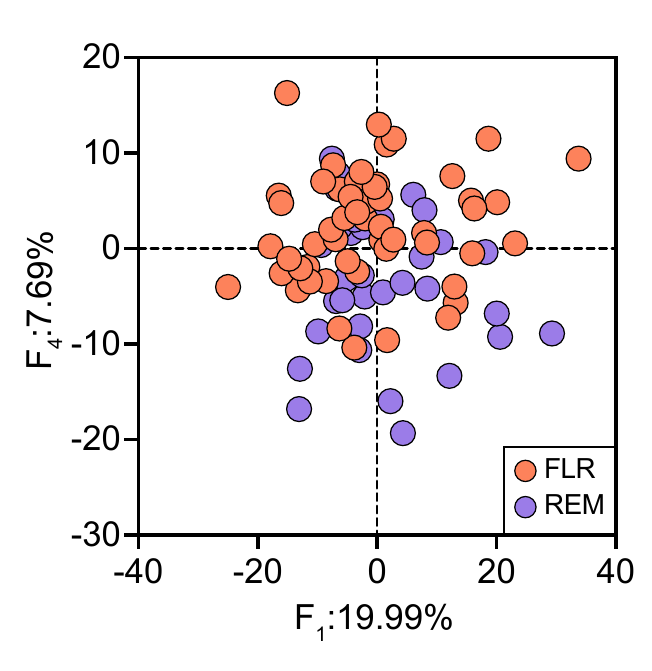}
    \end{subfigure}

    \caption{{\bf \tcam enables new discoveries and amenable for ML application.} 
    \Cref{fig:ibd.tcam.roc} ROC curve for MLP model trained to classify remission/flare based on \tcam{} transformed data of all timepoints.
    \Cref{fig:ibd.tcam.important} Bar plot showing importance scores of top 5\% ranked features.
    \Cref{fig:ibd.pruned.tcam} Scatterplot of \tcam{} scores computed on top 5\% most important features}
\end{figure}

\noindent
We next compared the performances of the MLP \tcam{} pipeline with those of an MLP pipeline with PCA on single time points of week 12 and week 52 (see \cref{sec:online.methods}), changing only the feature engineering step while using a fixed MLP architecture. Indeed, as we expected, \tcam{} outperformed PCA as a feature engineering technique, with both weeks 12 and 52, displaying a lower mean AUROC than \tcam{} based workflow (W12=0.61,W52=0.68, \Cref{fig:ibd.roc.w12,fig:ibd.roc.w52}).



\phantomsection \label{short summary}
\vspace{.25cm}
\noindent
Overall, in this section, we demonstrated the utility of \tcam{} across multi-omics data and as an integral part of ML pipelines. In both scenarios we were able to show that the application of \tcam{} had a unique added value over the traditional methods for longitudinal analysis. Moreover, we displayed the ability of \tcam{} to exploit the full power of ML tools in longitudinal datasets, as feature engineering tool, with the preservation of the original features contributions.

\section*{Discussion}
In this work, we presented \tcam{}, a novel dimensionality reduction method for longitudinal 'omics data analysis, constructed on top of solid tensor-tensor algebra innovations. We demonstrated that \tcam outperforms traditional and state-of-the-art methods for longitudinal analysis dimensionality reduction, both in terms of signature detection and by pruning for meaningful features. In addition, we showed that \tcam{} is applicable to diverse omics types, including amplicon and shotgun sequencing as well as proteomics. Furthermore, unlike other tensor factorization methods, \tcam entertains a natural out-of-sample extension formula, making it suitable for prediction tasks in complex experimental designs as a drop-in feature engineering utility within ML workflows. We have showed that we can preserve the feature importance contribution of the original features, even when \tcam{} is applied.

\noindent
To our knowledge \tcam{} is the first tensor component analysis framework that is guarantied, within the specific choice of domain transformation, to maximize the variance of the latent representation while keeping the distortion minimal. Thus, \tcam{} is amenable for traditional downstream applications often used in biological data analysis, such as multivariate hypothesis testing and ML workflows.  

\noindent     
While \tcam{} proves to be an extremely useful tool for all longitudinal analysis experimental designs, it relies on a fully sampled cohorts, where all participants provide a comparable number of samples and at similar time points in order to extract insightful biological signatures. Prior to usage of \tcam{}, a user will have to complete the missing time points by any method he chooses, as we have done throughout this study (see methods).

\noindent
Looking forward, the mathematical properties of \tcam{} are supposed to enable us not only to perform a trajectory analysis across time, but also in a spatial manner across a geographical landscape. Moreover, it is possible to employ a \tcam{} decomposition on higher order tensors, allowing for better understanding of even more complex experimental designs, such as incorporation of space and time together.

\noindent
Overall, this novel approach answers the important unmet need of longitudinal 'omics data analysis tool-kits that enable trajectory analysis, and is available \footnote{\href{https://github.com/UriaMorP/mprod_package}{https://github.com/UriaMorP/mprod\_package}} to the wide community as a simple, one-stop-shop Python implementation, that is compatible with the highly popular scikit-learn package.
We hope that the application of \tcam{} could help derive deep insights from large-scale, longitudinal and multi-omics data thus promoting personalized medicine, leading to the development of tailored treatments and preventive strategies for human diseases.

\vspace{\fill}
\section*{Acknowledgments}
E.E. is supported by the Leona M. and Harry B. Helmsley Charitable Trust, Adelis Foundation, Pearl Welinsky Merlo Scientific Progress Research Fund, Park Avenue Charitable Fund, Hanna and Dr. Ludwik Wallach Cancer Research Fund, Daniel Morris Trust, Wolfson Family Charitable Trust and Wolfson Foundation, Ben B. and Joyce E. Eisenberg Foundation, White Rose International Foundation, Estate of Malka Moskowitz, Estate of Myron H. Ackerman, Estate of Bernard Bishin for the WIS-Clalit Program, Else Kröener-Fresenius Foundation, Jeanne and Joseph Nissim Center for Life Sciences Research, A. Moussaieff, M. de Botton, Vainboim family, A. Davidoff, the V. R. Schwartz Research Fellow Chair and by grants funded by the European Research Council, Israel Science Foundation, Israel Ministry of Science and Technology, Israel Ministry of Health, Helmholtz Foundation, Garvan Institute of Medical Research, European Crohn’s and Colitis Organization, Deutsch-Israelische Projektkooperation, IDSA Foundation and Wellcome Trust. E.E. is the incumbent of the Sir Marc and Lady Tania Feldmann Professorial Chair, a senior fellow of the Canadian Institute of Advanced Research and an international scholar of the Bill \& Melinda Gates Foundation and Howard Hughes Medical Institute. H.A. is supported by the Israel Science Foundation, US-Israel Binational Science Foundation, and IBM Faculty Award. This research was partially supported by the Israeli Council for Higher Education (CHE) via the Weizmann Data Science Research Center.
\vfill

\section*{Author contribution}
{\bf U.M.} 
conceived the study, established theoretical results,
wrote the software package, 
analyzed the data, 
generated the figures 
and wrote the manuscript; 
{\bf Y.C. and R.V.-M.} 
analyzed the data, 
gave biologically and clinically meaningful interpretation for the results, 
generated the figures 
and wrote the manuscript;
{\bf D.K.} assisted with data analysis and biological interpretation, 
wrote the manuscript;
{\bf E.E.} conceived the study, 
mentored the participants and wrote the manuscript;
{\bf H.A.} conceived the study, 
supervised the theoretical aspects, 
mentored the participants and wrote the manuscript.
\vfill

\section*{Competing interest statement}
E.E. is a scientific founder of DayTwo and BiomX, and a payed consultant to Roots GmbH. H.A. is an inventor of U.S. patent US10771088B1 which discloses the {\sc TSVDM} decomposition which is the basis for \tcam{}. U.S. patent US10771088B1 is assigned to Tel Aviv Yafo University,  International Business Machines Corp and Tufts University. Inventors are Lior Horesh, Misha E. Kilmer, H.A., and Elizabeth Newman. The remaining authors declare no competing interests. U.M., Y.C., R.V.-M. and D.K. do not have any financial or non-financial competing interest.
\vfill

\section*{Code and data availability}
The code for the analysis presented in the paper can be found in the Github repository~\footnote{\href{https://github.com/UriaMorP/tcam_analysis_notebooks}{https://github.com/UriaMorP/tcam\_analysis\_notebooks}}. No new data was generated in this study. 


\pagebreak

\onlinemeth{Methods}%
    \label{sec:online.methods}

A real tensor of order-$N$, denoted by $\tA \in \RR^{d_1 \xx d_2 \xx \cdots \xx d_N}$, is a multi-dimensional array with real entries indexed by $N$-tuples. 
For example, the  $i_1 , i_2 , \dots , i_N$ entry of $\tA$ is denoted by $\tA_{i_1 , i_2 , \dots , i_N}$.
In this paper, we consider $3^{rd}$ order tensors $\tA \in \RR^{\mpn}$ holding data from $p$-dimensional samples, collected from $m$ subjects across $n$ time-points.
The size of $p$ is determined by the number of features measured in the 'omics method being used, it can be the number of observed bacterial species in metagenomics sequencing or the number of genes in transcriptomics etc.
We use Matlab notations for slicing and indexing of tensors, e.g., $\tA_{i,:,:} \in \RR^{1 \xx p \xx n}$ denotes the $i^{th}$ {\it horizontal slice} of $\tA$, which may be considered as a $p \xx n$ matrix. 

\vspace{0.2cm}

\phantomsection \label{def:datapreproc.mdf}
\noindent
The mean sample of a tensor $\tA$, is defined as $\bar{\tA} = 1/m \sum_{i=1}^{m} \tA_{i,:,:} \in \RR^{1 \xx p \xx n}$.
A tensor $\tA$ is in {\bf mean-deviation form} (MDF), if  $\FNormS{\bar{\tA}} = 0$ , where $\FNorm{\cdot}$ denotes the Frobenius norm: $\FNormS{\tA} = \sum_{k,j,i} \tA_{k,j,i}$. Any tensor $\tA$ can be centered to MDF by subtracting its mean sample from each horizontal slice. 

\noindent
Given a non-singular $n \xx n$ matrix $\matM$, the tubal singular value decomposition with respect to the the $\Mprod$-product (\tsvdm{}) of $\tA$ is written as $\tA = \tU \Mprod \tS \Mprod \tVt$ where $\Mprod$ denotes the tensor-tensor product (Refer to ~\cite{Kilmer} for original definitions, and~\cref{app:discussion} for details). Throughout this study, we considered $\matM$ defined by the discrete cosine transform (DCT). 

\noindent
Given a tensor $\tA$ in MDF, the \tcam{} of $\tA$ is defined by a {\it scores} matrix $\matZ \in \RR^{m \xx pn}$ whose $(\ell,h)$ entry is $ ( \widehat{\tA \Mprod \tV} )_{\ell, \matr_{h,1}, \matr_{h,2}}$, and a $np \xx p$ {\it loadings} matrix $\mat{V} $ with entries $\mat{V}_{h,j} = \thV_{\matr_{h,1},j,\matr_{h,2}}$.
The notation $\widehat{\tX}$ denotes the domain transform of a tensor $\tX$  and $\matr = \{ (\matr_{h,1}, \matr_{h,2}) \}_{h=1}^{pn}$ is an ordered collection of tuples such that $\thS_{\matr_{1,1} , \matr_{1,1}, \matr_{1,2}} \geq \thS_{\matr_{2,1} , \matr_{2,1}, \matr_{2,2}} \geq \cdots \geq \thS_{\matr_{np,1} , \matr_{np,1}, \matr_{np,2}}$ .
Each row of the factors matrix represents the $p$-dimensional time-series (trajectory) of each subject, while the loadings matrix measures the contribution - magnitude and direction - of each of the $p$ 'omics features to each of the \tcam{} factors across samples. 
Refer to ~\cref{app:discussion} for formal definitions, construction and mathematical optimality guarantees.

\noindent
Excluding MDF, the \tcam{} makes no assumptions on the data, making it suitable for any choice of normalization method. Unless stated otherwise, data were normalized to form log2 folds from baseline (\LFB);  for experiment with timepoints $t_1 , \dots , t_n$, where $t_1 ... t_j$ are  considered the baseline samples. Let $s_{\ell 1},...,s_{\ell n}$  denote the samples collected from subject $\ell$. The \LFB transformed data $\hat{s}_{\ell 1},...,\hat{s}_{\ell n}$  is defined by $ \hat{s}_{\ell k} \coloneqq \log_2 (s_{\ell k} / \bar{s_\ell})$  where $\bar{s_\ell}$  is the mean of subject $\ell$'s baseline samples: $s_{\ell 1},...,s_{\ell j}$.

\subsection*{Data processing}

All \tcam were computed on MDF of the data.  
The pre-processing and analysis steps taken vary between datasets and are listed below. 
Fine details are described in the provided code~\footnote{\href{https://github.com/UriaMorP/tcam_analysis_notebooks}{https://github.com/UriaMorP/tcam\_analysis\_notebooks}}. 

\paragraph*{Post antibiotics reconstitution}
Shotgun metagenomics sequencing data of stool samples was downloaded from ENA (project accession number: PRJEB28097). 
QC filtration and read trimming was done using fastp, followed by removal of reads mapped to human genome by bowtie2 mapper. MetaPhlan3 was used for taxonomic assignment of the reads. 
For the analysis, we included the timepoints with missing samples of no more than 2 subjects, allowing for single day deviation in any direction. 
Subjects without at least one sample in each phase of the experiment (baseline, antibiotics, intervention) were excluded. 
Relative abundance values were capped at \(10^{-6}\) and features with maximal values less then \(10^{-6}\) were omitted.  
For the analysis using \tcam{}, data from each participant were \LFB transformed. 
PERMANOVA was computed using truncated distance matrices reconstructed using the minimal number of components (either \tcam or PCA) such that the truncation accounts for at least 20\% of the total variation in the data. 
Feature selection for univariate time-series analysis was done by taking the 0.75 quantile of loadings norm computed for the factors demonstrating significant univariate difference (ANOVA). 
Univariate time-series analysis was performed using lmer. 

\paragraph*{Dietary fiber intervention}
16S rDNA sequencing data and metadata from this study~\cite{Deehan2020} were downloaded from ENA (project accession number: PRJNA560950). 
Overlapping paired-end FASTQ files of 16S amplicon sequencing data were matched and analyzed using the Qiime2 pipeline (q2cli version 2021.4.0)~\cite{Bolyen2019}. 
Poor quality bases were trimmed, sequences were denoised and binned to amplicon sequence variants (ASVs) using the dada2 plugin for Qiime2~\cite{Callahan2016}.
Taxonomic assignment was performed using naive Bayes feature classifier and Greengenes 13\_8 database. 
Gemelli method~\cite{Martino2020} was run using raw counts and default parameters. 
For the analysis using \tcam method, relative abundance values were capped at \(10^{-3}\), features with maximal values less then \(10^{-3}\) were omitted and data from each participant were \LFB transformed. 
PERMANOVA was computed using distance matrices reconstructed with all components (either \tcam or PCA). 
Feature selection for univariate time-series analysis was done by taking the 0.75 quantile of loadings norm computed for the factors demonstrating significant univariate difference (ANOVA). 
Univariate time-series analysis was performed using lmer. 

\paragraph*{Pediatric ulcerative colitis}
Sample-specific metadata and final microbial OTU relative abundances were acquired from~\cite{Schirmer2018}. Subjects without at least one sample in each timepoint of the experiment (0, 4, 12, 52) were excluded. 
Relative abundance values were capped at \(10^{-4}\), features with maximal values less then \(10^{-4}\) were omitted and data from each participant were \LFB transformed.
To study differences in temporal microbiome composition between the three treatment groups, PERMANOVA was computed using distance matrices reconstructed with all components. 
A multilayer perceptron (MLP) with single 1000 neuron wide hidden layer was trained to predict treatment groups (5ASA, CS-Oral and CS-IV) using the minimal number of \tcam factors such that at least 90\% of the variation in the data is explained by the factors. 
On the other hand,a similar MLP architecture was trained to predict the combined treatment groups (5ASA, CS), using the minimal number of \tcam factors such that at least 80\% of the variation in the data is explained by the factors. 
Similarly, an MLP with single 1000 neuron wide hidden layer was trained to predict remission state using the minimal number of \tcam factors such that at least 80\% of the variation in the data is explained by the factors. 
The same MLP architecture was trained using log2 fold changes of weeks 12 and 52 to the baseline.

\paragraph*{Identification of insulin resistance using longitudinal proteomics}
Proteomics data and metadata were downloaded from ~\href{https://figshare.com/articles/dataset/Multi\_Omics\_Seasonal\_RData/12376508}{https://figshare.com/articles/dataset/Multi\_Omics\_Seasonal\_RData/12376508}. 
Three years duration was stratified to trimesters, and repeated samples of the same participant within a trimester were median aggregated. 
Subjects lacking measurements in more than a single trimester were omitted from the analysis. 
Missing timepoints filled via linear interpolation or forward/backward filled in case of missing last/first trimester. 
The first year was considered as ``baseline". 
Since proteomics data contain negative values, the data for each subject was shifted by the median baseline measurements for that subject.


\vfill
\rule[0.5ex]{1\columnwidth}{1pt}

\pagebreak
\section{Supplementary Figures}
    \begin{supfigure}[ph!]
    \hspace*{-.25in}
    \centering
    \begin{subfigure}[b]{0.31\textwidth}
        \centering
        \caption{\label[sfig]{sfig:postabx.pca.baseline}}
        \includegraphics[scale=.8]{./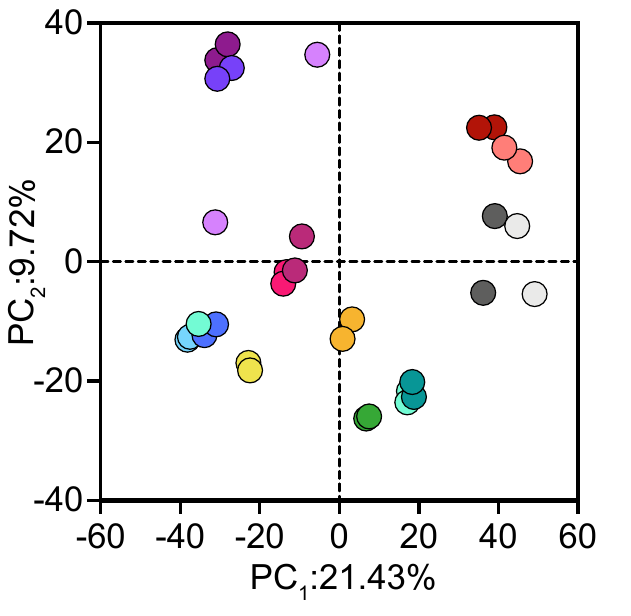}
    \end{subfigure}%
    \begin{subfigure}[b]{0.31\textwidth}
        \centering
        \caption{\label[sfig]{sfig:postabx.pca.group}}
        \includegraphics[scale=.8]{./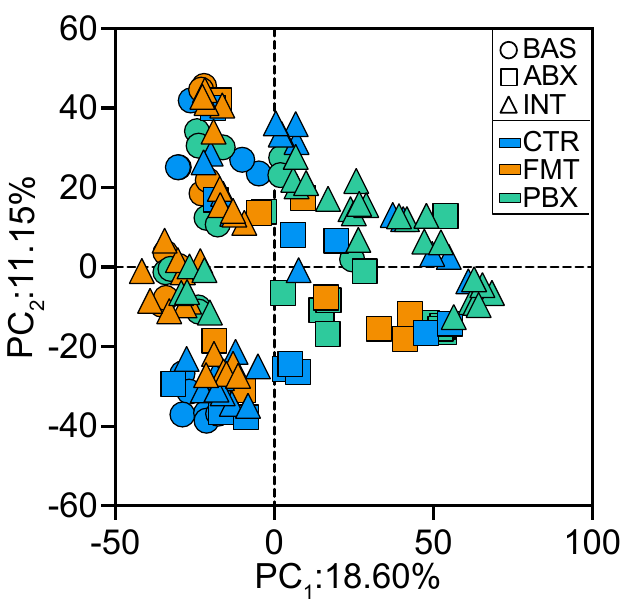}
    \end{subfigure}%
    \begin{subfigure}[b]{0.31\textwidth}
        \centering
        \caption{\label[sfig]{sfig:postabx.pca.group.BAS}}
        \includegraphics[scale=.8]{./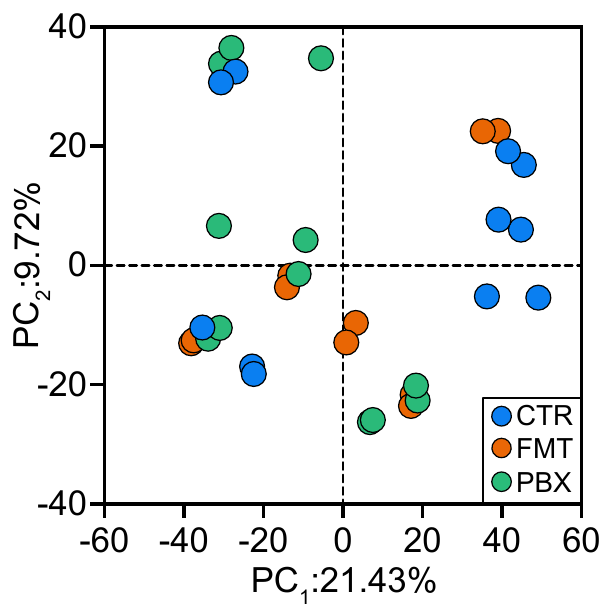}
    \end{subfigure}
    \begin{subfigure}[b]{0.31\textwidth}
        \centering
        \caption{\label[sfig]{sfig:postabx.pca.group.ABX}}
        \includegraphics[scale=.8]{./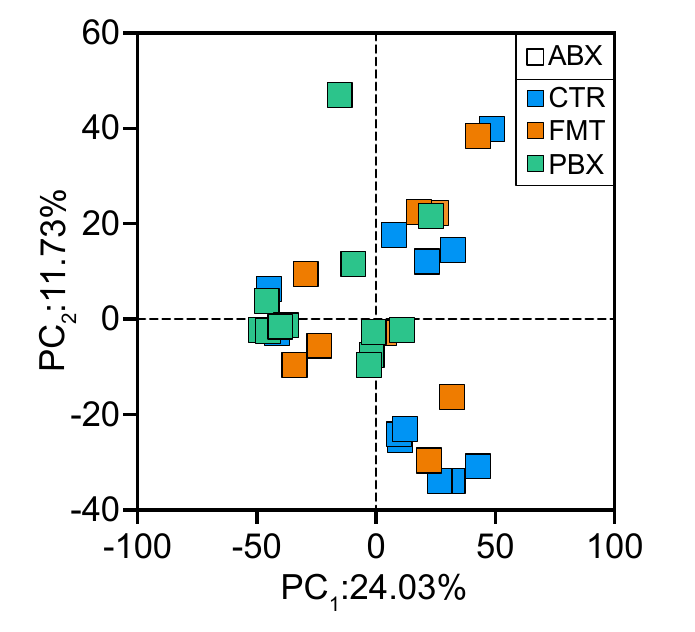}
    \end{subfigure}%
    \begin{subfigure}[b]{0.31\textwidth}
        \centering
        \caption{\label[sfig]{sfig:postabx.pca.group.INT}}
        \includegraphics[scale=.8]{./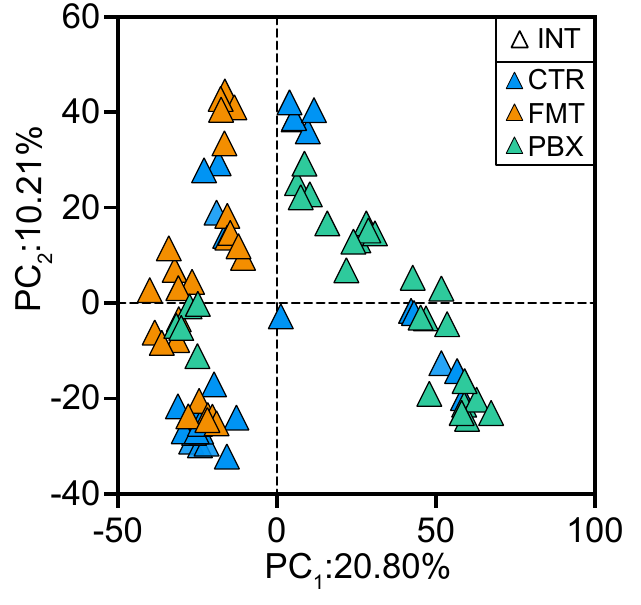}
    \end{subfigure}\hfill%
    \begin{subfigure}[b]{0.31\textwidth}
    \hfill
    \end{subfigure}

  \caption{{\bf Comparison of \tcam with existing matrix based methods for exploratory analysis }
  \Cref{sfig:postabx.pca.baseline} PCA plot of baseline timepoints, 1-2 samples per each subjects. Points are colored according to participant.
  \Cref{sfig:postabx.pca.group} PCA plot of all timepoints. Points are colored according to group.
  \Cref{sfig:postabx.pca.group.BAS,sfig:postabx.pca.group.ABX,sfig:postabx.pca.group.INT} PCA plot of baseline, antibiotics, and intervention phases respectively. Points are colored according to group}

\end{supfigure}

\begin{supfigure}
    \begin{subfigure}[b]{1\textwidth}
        \centering
        \caption{\label[sfig]{fig:postabx.timeser.tcam}}
        \includegraphics[scale=.65]{./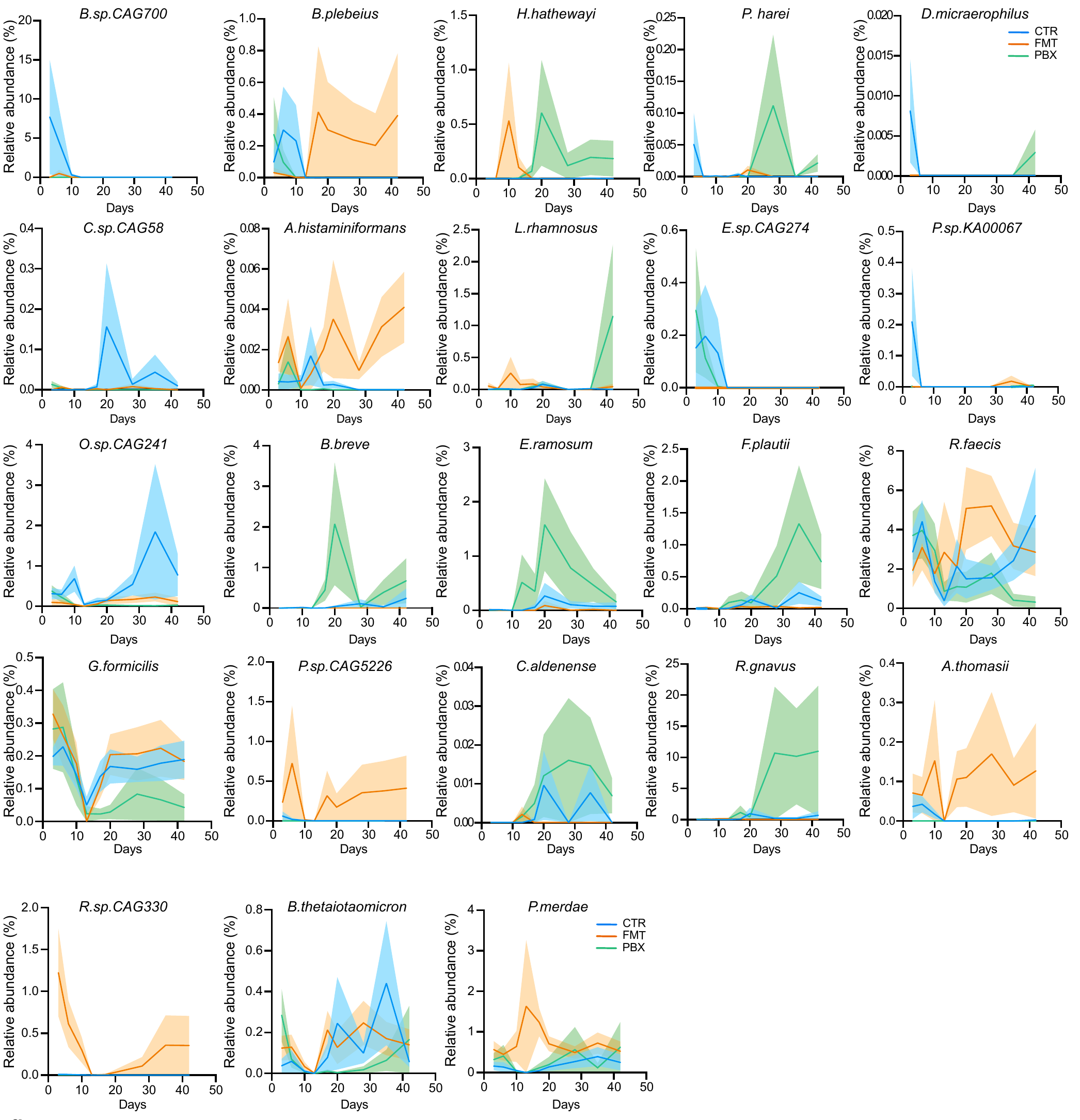}
    \end{subfigure}
    \begin{subfigure}[b]{1\textwidth}
      \centering
      \caption{\label[sfig]{fig:postabx.timeser.mutual}}
      \includegraphics[scale=.65]{./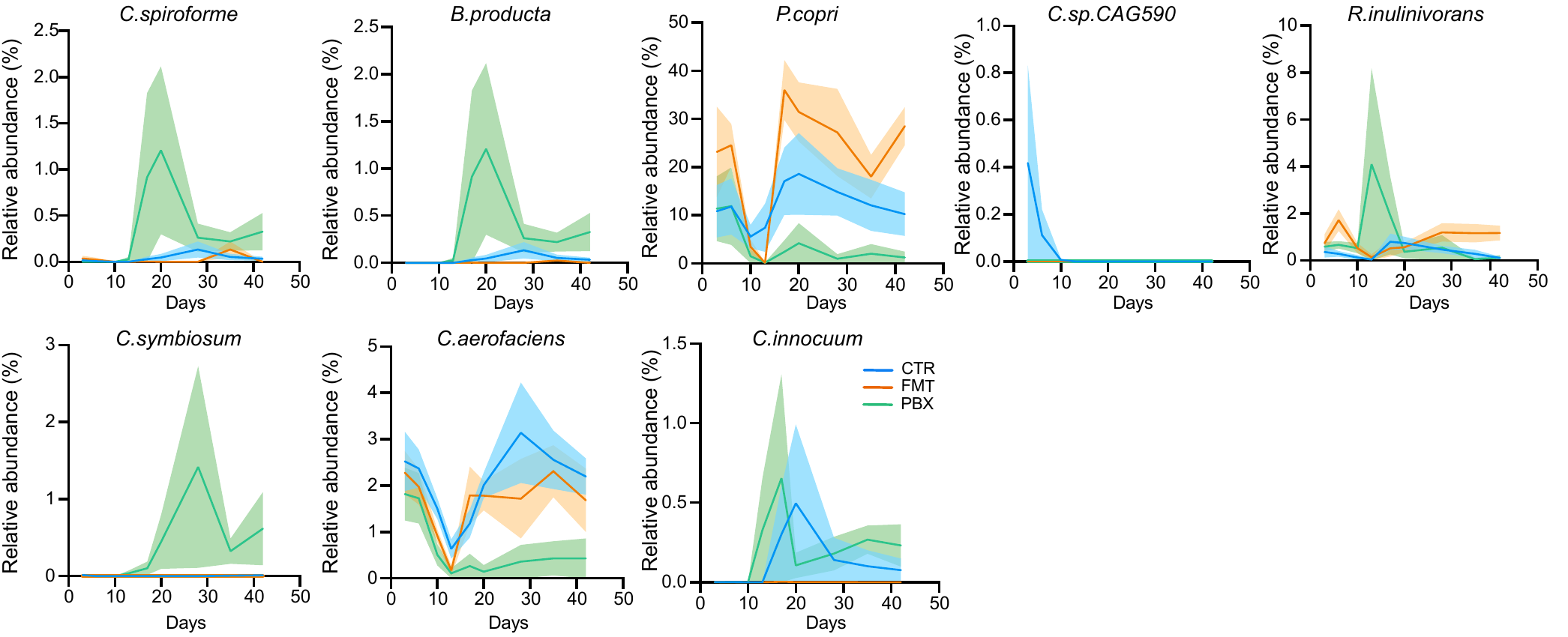}
    \end{subfigure}
    \begin{subfigure}[b]{1\textwidth}
        \centering
        \caption{\label[sfig]{fig:postabx.timeser.all}}
        \includegraphics[scale=.65]{./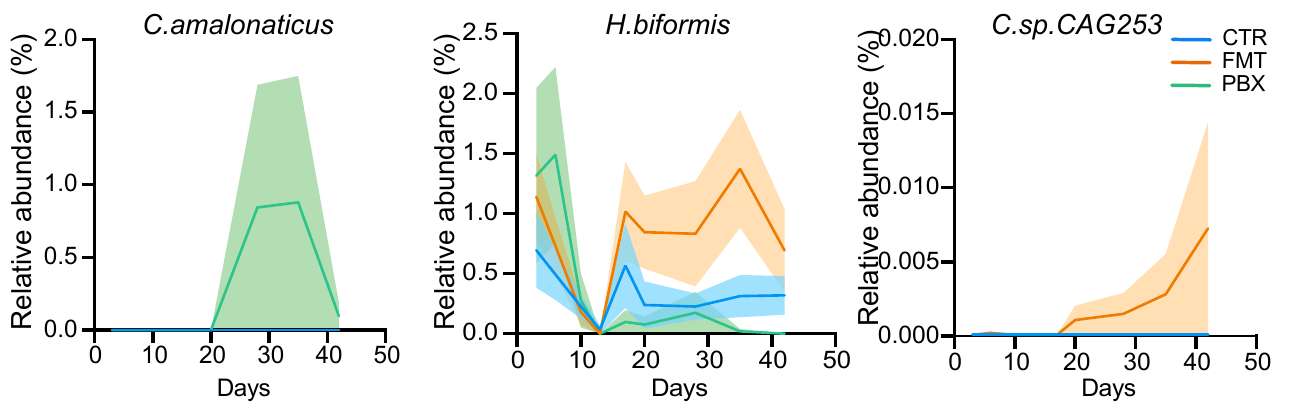}
    \end{subfigure}
    
    \caption{{\bf Comparison of discovery rates between naive time-series analysis and \tcam based pruning}
    \Cref{fig:postabx.timeser.tcam} Time series of relative abundance levels for features discovered only when pruning the features.
    \Cref{fig:postabx.timeser.all,fig:postabx.timeser.mutual} Time series of relative abundance levels for features discovered when no pruning scheme is used (top) and by both methods (bottom).}    
\end{supfigure}
\pagebreak
    \begin{supfigure}[ph!]
    \hspace*{-.25in}
    \centering
    \begin{subfigure}[t]{.31\textwidth}
        \centering
        \caption{\label[sfig]{fig:ibd.remission.projections.tcam}}
        \includegraphics[scale=.8]{./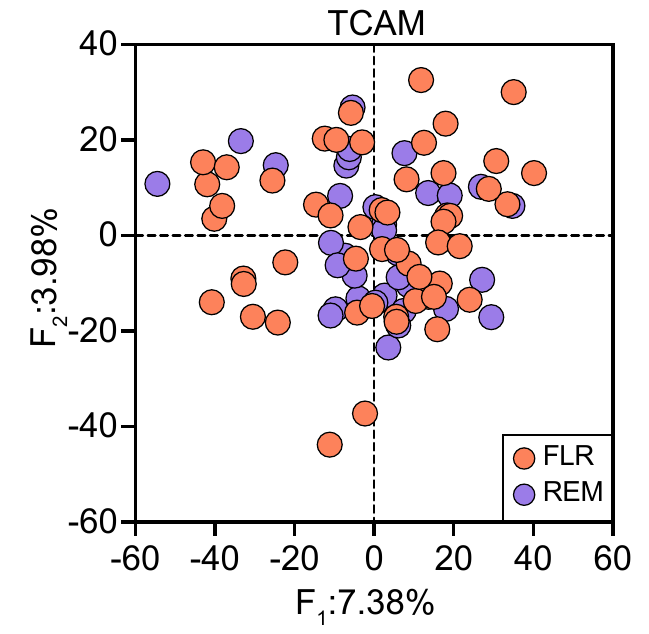}
    \end{subfigure}\hfill%
    \begin{subfigure}[t]{.31\textwidth}
        \centering
        \caption{\label[sfig]{fig:ibd.remission.projections.w12}}
        \includegraphics[scale=.8]{./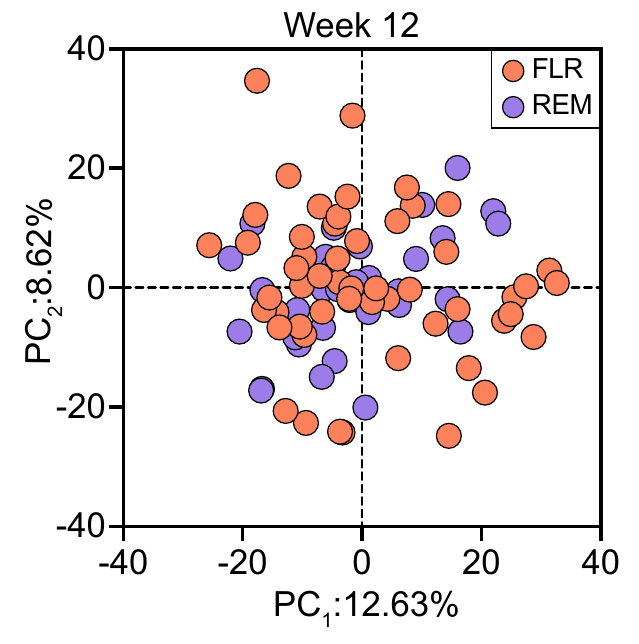}
    \end{subfigure}\hfill%
    \begin{subfigure}[t]{.31\textwidth}
        \centering
        \caption{\label[sfig]{fig:ibd.remission.projections.w52}}
        \includegraphics[scale=.8]{./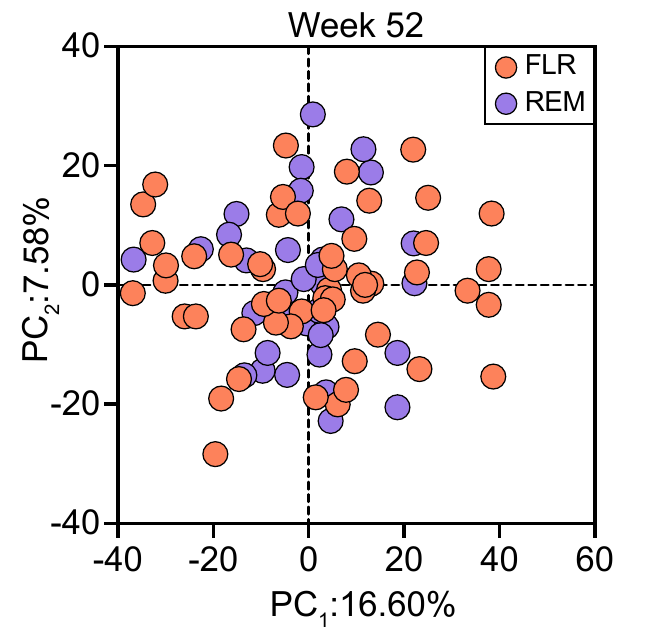}
    \end{subfigure}
    
    \hspace*{-.25in}
    \centering
    \begin{subfigure}[t]{1\textwidth}
        \centering
        \caption{\label[sfig]{fig:abundances.important.features}}
        \includegraphics[scale=.9]{./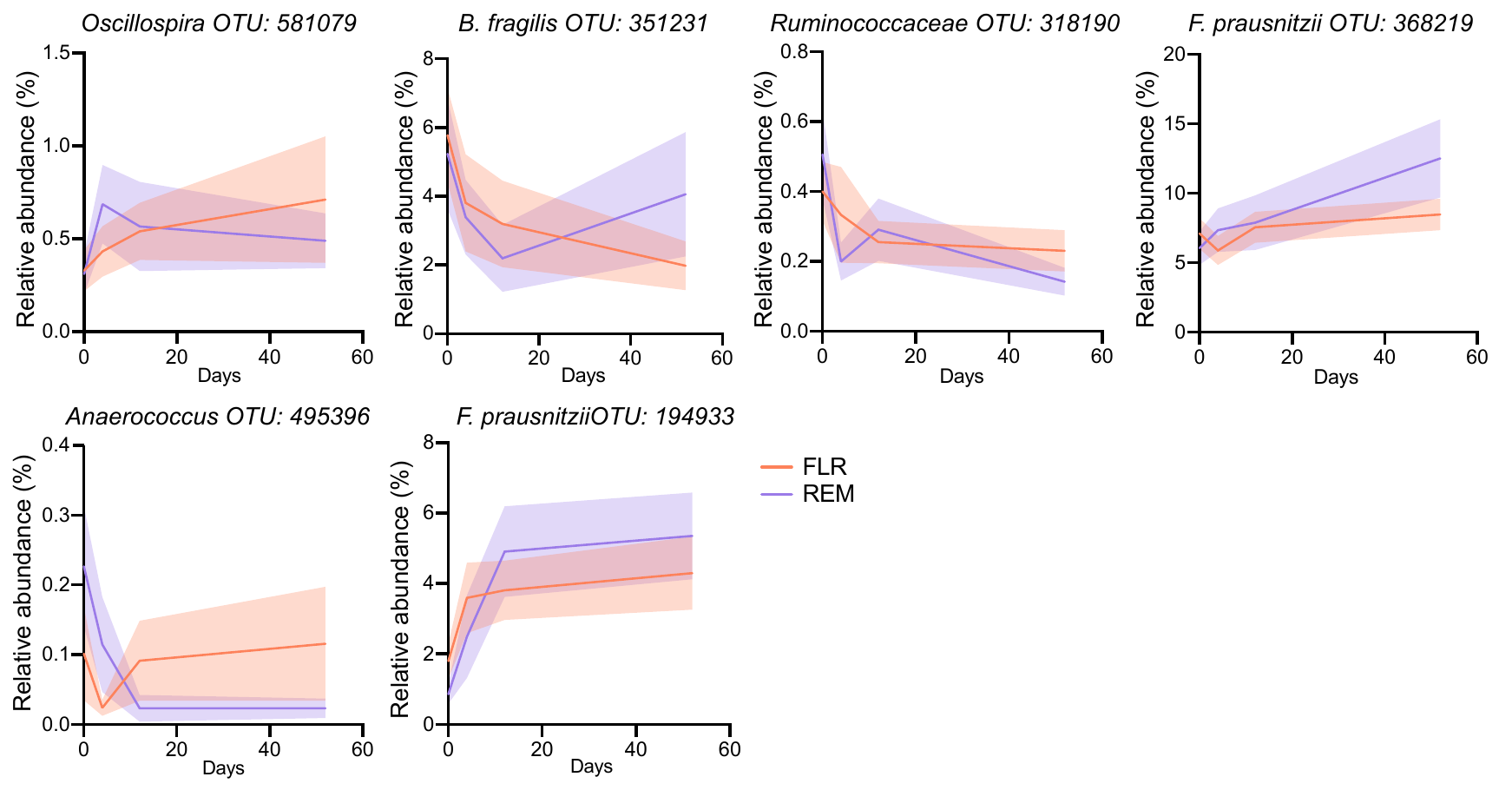}
    \end{subfigure}
    
    \hspace*{-.25in}
    \centering
    \begin{subfigure}[t]{.31\textwidth}
        \centering
        \caption{\label[sfig]{fig:ibd.roc.w12}}
        \includegraphics[scale=.8]{./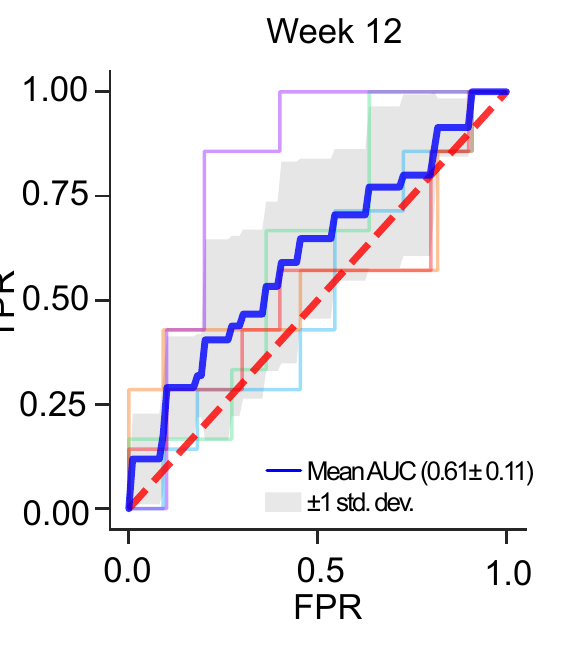}
    \end{subfigure}\hfill%
    \begin{subfigure}[t]{.31\textwidth}
        \centering
        \caption{\label[sfig]{fig:ibd.roc.w52}}
        \includegraphics[scale=.8]{./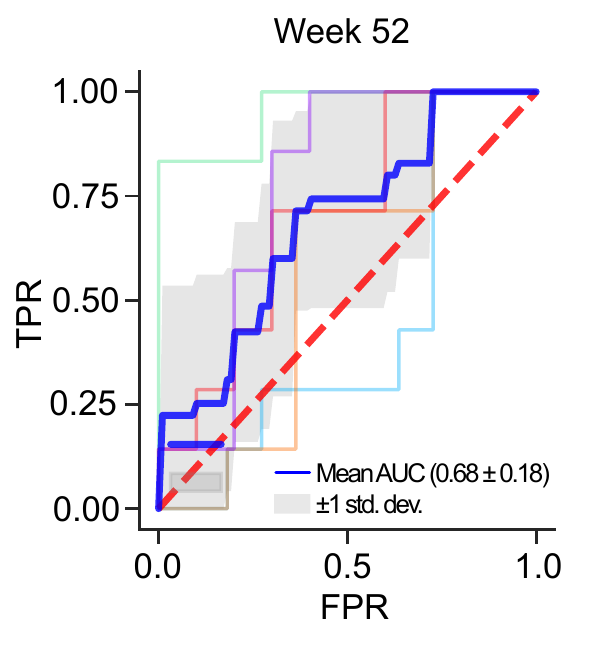}
    \end{subfigure}\hfill%
    \begin{subfigure}[t]{.331\textwidth}
        \hfill
    \end{subfigure}
    
    \caption{{\bf \tcam analysis on pediatric UC patients.}
    \Cref{fig:ibd.remission.projections.tcam,fig:ibd.remission.projections.w12,fig:ibd.remission.projections.w52} Scatter plots for 2 leading factors of  \tcam{} for the whole dataset (\Cref{fig:ibd.remission.projections.tcam}); PCA computed for log2 ratio of week 12 and baseline (\Cref{fig:ibd.remission.projections.w12}); PCA computed for log2 ratio of week 52 and baseline (\Cref{fig:ibd.remission.projections.w52}). Points are colored according to remission (REM) and flare (FLR) status.
    \Cref{fig:abundances.important.features} Time series of relative abundance levels, highlighting the differences in trajectories of the features contributing to the remission status classification model.
    \Cref{fig:ibd.roc.w12,fig:ibd.roc.w52} ROC curve for MLP model trained to classify remission/flare based on PCA transformed log fold change between week 12 and baseline (~\cref{fig:ibd.roc.w12}) and log fold change between week 52 and baseline (~\cref{fig:ibd.roc.w52}).
    (~\hyperref[par:ibd]{back to text})}    
\end{supfigure}
  
  

\pagebreak

\begin{appendices}
\onlinemeth{Supplementary Discussion}%
    \label{app:discussion}
    A real order-\(N\) tensor \(\tA \in \RR^{d_1 \xx d_2 \xx ... \xx d_N}\)  is an multi-dimensional array of entries\footnote{In pure mathematics this is actually the definition of a {\em hypermatrix}, while a {\em tensor} is an algebraic object that describes a multilinear relationship. However, in data science often the term `tensor' is abused to mean a hypermatrix, and we adopt this terminology here as well.}. Each of the entries of \(\tA\) can be referred to by specifying an \(N\)-tuple of numbers $(i_1 ,..., i_N)$ where \(i_k \in \{ 1, 2,\dots,d_k \} \). 

A third-order tensor  \(\tA \in \RR^{\mpn}\) can also be viewed as an \(m\) elements long list of \(p \xx n\) matrices, each an horizontal slice of the tensor (\Cref{sfig:cartoon.horizontal.slices.full,sfig:cartoon.horizontal.slices}). This mathematical construct is appealing in the context of  longitudinal studies as it enables storing the data in a way that is consistent with the data collection. One might think of \(\tA\) as a data-structure for holding the results of an experiment during which \(n\) samples were collected from \(m\) participants, and each sample is characterized by \(p\) features. These features may be genes in the case of RNA-seq samples, taxonomic composition of shotgun metagenomics sequencing, etc.
The tensor data structure reflects not only the data points but also key relationships between them. 
For example, let \(1 \leq i \leq p \) an integer denoting the index of a certain feature, then variations of this feature across the whole cohort are obtained by fixing the second coordinate of the tensor to \(i\): \(\tA_{:,i,:}\)\footnote{Here, we are using {\sc Matlab} notation, in which `:' denotes the entire range of a mode, and `\(l:k\)' denotes \(\{l, l+1, \dots, k\}\).}. Similarly, tracking this feature in a single timepoint \(t\) is done by restriction of the last two coordinates \(\tA_{:,i,t}\).
\begin{supfigure}[ph!]
    \hspace*{-.25in}
    \centering
    \begin{subfigure}[b]{0.31\textwidth}
        \centering
        \caption{\label[sfig]{sfig:cartoon.horizontal.slices.full}}
        \includegraphics{./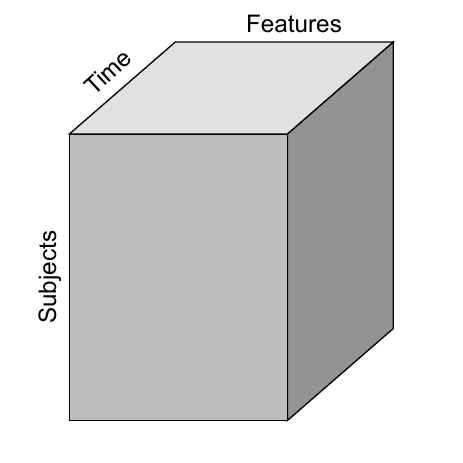}
    \end{subfigure}
    \begin{subfigure}[b]{0.31\textwidth}
        \centering
        \caption{\label[sfig]{sfig:cartoon.horizontal.slices}}
        \includegraphics{./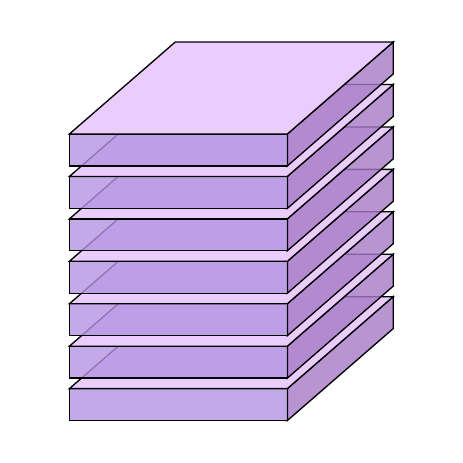}
    \end{subfigure}
    
    \caption{{\bf Subject centered view of $3^{rd}$ order tensor }. 
    \Cref{sfig:cartoon.horizontal.slices.full} An illustration of the data structure. 
    \Cref{sfig:cartoon.horizontal.slices} The right panel presents a breakdown of the left tensor into $m$ horizontal slices that are $p \xx n $ matrices.}
\end{supfigure}

The arrangement of the same data in the form of a matrix, which has only two dimensions, would require us to make a somewhat arbitrary choice about which of the two dimensions are to be coalesced into a single dimension. 
For example, one might consider each individual subject as a single sample, and as such each individual will have a designated row in the data matrix, while  repeated measurements of the same feature at several timepoints are treated as entirely different features, resulting in an \(m \xx np\) matrix. Note that by concatenating the repeated samples of each individual, we form a new feature space in which the temporal context is lost. 
Another option is to define samples as \(p\) dimensional entities measured for each subject at all timepoints, resulting in an \(nm \xx p\) matrix. This formulation breaks the correspondence between data of the same individual across timepoints, as well as the data of all individuals at a single timepoint. 

Continuing with the above example, if in addition we were to sample each subject at \(k\) body sites (instead of one), then the number of possible ways of rearranging data in the form of a matrix would have increased to four different choices, each of which captures a different aspect of the experimental design. On the other hand, having the data in a format of tensor would only require an addition of a single mode for describing the \(k\) different body sites, resulting in a fourth-order tensor in \(\RR^{\mpn \xx k}\). 
The advantage of representing data in native tensor format as opposed to {\em matricizing} it is not unique to longitudinal studies and appears in many domains. See~\cite{Kilmer} for discussion.

In this work, we are concerned with tensor-based dimensionality reduction methods. It was previously shown, e.g.~\cite{Williams2018} and more recently~\cite{Martino2020,Delannoy-Bruno2021}, that tensor-based dimensionality reduction methods have a potential to provide more meaningful output compared to analogous matrix-based methods. Most of these works consider  the low-rank approximation in the form of a CP factorization~\cite{Hitchcock1927}
\begin{equation}\label{app:supdisc.cp.def}
	\tA \approx \tA_r = \sum_{i=1}^{r} \sigma_i \u_{i}^{(1)} \circ \u_{i}^{(2)} \circ ... \circ  \u_{i}^{(N)} \in \RR^{d_1 \xx ... \xx d_N} ,
\end{equation}
where \(\sigma_i \) are positive scalers, the {\it components} \(\u_{i}^{(j)} \) are \(d_j\) dimensional unit vectors and \(\circ\) denotes the outer-product operation (\(\u_{i}^{(j)} \circ \u_{i}^{(k)}  \) is a \(d_j \xx d_k\) matrix, while \(\u_{i}^{(j)} \circ \u_{i}^{(k)} \circ \u_{i}^{(l)}  \) is a tensor in \(\RR^{d_j \xx d_k \xx d_l }\)). 
The number of terms \(r\) in the above factorization denotes the maximal rank of the sought approximate \(\tA_r\) of the original tensor \(\tA\). For brevity, we denote a CP factorization by \(\tA_r = [\bf{\Sigma}; \matU_1 ,..., \matU_N]\) where \(\bf{\Sigma} = \text{diag} (\sigma_1,...,\sigma_r) \), and \(\matU_j\) are \(d_j \xx r\) matrices of columns that are the unit vectors \(\u_{1}^{(j)},...\u_{r}^{(j)}\). As a convention, each of the \(r\) summands 
($\sigma_i \u_{i}^{(1)} \circ ... \circ  \u_{i}^{(N)}$)
 is a rank-1 tensor (for alternative definitions of tensor rank see~\cite{Kolda2009}). 
This factorization provides an intuitive breakdown of the data which is somewhat analogous to that of a PCA for matrix data: the overall contribution of each component \(\u_{i}^{(j)}\) to the approximation is determined by the magnitude of the corresponding scaler \(\sigma_i\)  (larger scaler implies greater contribution), and the components themselves may reflect the different modalities of the data (in the above example, \(\u_{i}^{(1)}\) are associated with the different subjects, while \(\u_{i}^{(2)},\u_{i}^{(3)}\) components are associated with features and timepoints respectively). 

Computing an approximationin n the form of a CP decomposition of a given tensor $\tA$ is usually accomplished by solving the following optimization problem:
\begin{eqnarray}
\label{eq:opt-cp}
	[\bf{\Sigma}; \matU_1 ,..., \matU_N] &= \argmin_{\tilde{\bf{\Sigma}}, \tilde{\matU}_1 ,..., \tilde{\matU}_N} \FNormS{\tA - \sum_{i=1}^{r} \tilde{\sigma}_i \tilde{\u}_{i}^{(1)} \circ  ... \circ  \tilde{\u}_{i}^{(N)} }
\end{eqnarray} 
where $\FNorm{\tX}$ denotes the Frobenius norm of the tensor $\tX$: 
$$
\FNormS{\tX} = \sum_{i_1 = 1}^{d_1} \cdots \sum_{i_N = 1}^{d_N} \tX_{i_1 , i_2 , \dots i_N} ^2
$$
The current gold-standard method for solving this problem is Alternating Least Squares (ALS)~\cite{Kolda2009}.

As discussed in the main text, it is generally hard to solve Problem~\eqref{eq:opt-cp} as it is non-convex, thus potentially having many local minimizers which are not global minima. Indeed, even the gold-standard method (ALS) is not guaranteed to find a global minimum.  Moreover, given a data tensor \( \tA \) (subjects, features and time),  and its CP form approximate \(\tA_r = [\bf{\Sigma}; \matU_1 ,\matU_2, \matU_3] \),  suppose that we are provided with data for a new participant \(\tX \in  \RR^{1 \xx p \xx n}\), then the task of extending the current factorization to the new data tensor, which is \(\tA\) augmented with the new sample \(\tX\), is far from a trivial one. 
Such out-of-sample extension capability is a fundamental requirement from any embedding algorithm  that we wish to use as a step in a ML pipeline. 

The aim of \tcam is to provide a tensor-based PCA-like tool that is better  than CP, in terms of ease of interpretation of the model's outcomes and mathematical properties that ensure safe application to downstream statistical analysis and ML workflows.  


    \section{PCA}\label[suppmethod]{app:matrix.pca.sec}
        Since we strive to have to claim that \tcam is `PCA-like', it is useful to first present a brief definition of PCA, and discuss it characteristics. All constructions, definitions and properties presented in this section are taken from~\cite{jolliffe1986principal}.

Let \(\matA \in \RR^{m \xx p}\) be a data matrix with rows \(\mat{a}_1, \dots, \mat{a}_m\) corresponding to \(m\) samples, and assume that \(\matA\) has been centered (so the column means are zero). PCA is defined by an orthogonal linear transformation \(\matW \) transforming the rows of \(\matA\) to a new coordinates system, in which, the largest portion of variation in the data lies on the first axis (called the first principal component of \(\matA\)), i.e. 
$
    \z_1 = \matA \w_1
$
where $\w_1 \in \RR^{p}$ is the maximizer of $\wt \matAt \matA \w$ subjected to $\TNormS{\w} = 1$.
The \(k^{th}\) largest portion of variation lies on the \(k^{th}\) principal component of \(\matA\), i.e. 
\begin{align*}
    &\z_k = \matA \w_k ,&
    &\w_k = \text{argmax}_{\TNorm{\w} = 1} \w^{\T} {\matA^{(k)}}^{\T} \matA^{(k)} \w 
\end{align*}
where given the first \(k-1\) principal components of \(\matA\) we define \(\matA^{(k)} = \matA (\mat{I} - \sum_{i=1}^{k-1} \w_i \w_i^{\T}) \). 
The complete factorization yields the expression $\matZ = \matA \matW$, 
where rows of the matrix \(\matZ \in \RR^{m \xx p}\)  are called the {\it sample PC scores} (the \(i^{th}\) row of \(\matZ\) contains the sample PC scores of the \(i^{th}\) sample \(\mat{a}_i\)). The orthogonal \(p \xx p\) matrix \(\matW\) is called the weight or coefficients matrix. 
The rank \(k\)-truncated PCA of a $m \xx p $ matrix $\matA$ is defined by
\begin{equation}\label{app:pca.trunc.def}
    \matZ_k = \matA \matW_k \in \RR^{m \xx k}
\end{equation}
where \(\matW_k \coloneqq [\w_1 , ... , \w_k]\) is an \(p \xx k\) matrix with orthonormal rows; \(\matW_{k}^{\T} \matW_{k} = \mat{I}_k\). The \(k\)-truncated PCA \(\matZ_k\) consists of only the first \(k\) principal components.

The {\bf sample variance-covariance matrix} of the collection \(\{\mat{a}_i\}_{i=1}^{m}\) of \(m\), (centered) \(p\)-dimensional samples \(\matS\) is given by
$
    \matS = (m-1)^{-1} \matA^{\T} \matA ~.~
$


\begin{property}[{\cite[Chapter~2,~Properties~A1~and~A2]{jolliffe1986principal}}]\label{app:pca.prop.max.var}
    For any integer \(q = 1, \dots, p \), consider  the transformation $\y_i = \matBt \mat{a}_i$, where \(\y \in \RR^{q}\) and \( \matB \) is a ${p \xx q}$ matrix with $q$ orthonormal columns.  
    Define \(\matS_{\y} = \matBt \matS \matB \in \RR^{q \xx q}\) the variance-covariance matrix for \(\y\).
    The variance component of $\matS_{\y}$, quantified by $\trace{\matS_{\y}}$, is maximized when taking $\matB =\matW_q$ and minimized when $\matB = [\w_{p-q} ,\w_{p-q+1}, ... \w_p]$
\end{property}


\begin{property}[{\cite[Chapter 3, Eq. 3.1.4, Property A3]{jolliffe1986principal}}]\label{app:pca.prop.spectral}
    $\matS = \sum_{i=1}^{p} \lambda_i \w_i \wt_i$
\end{property}

\begin{property}[{\cite[Chapter 3, Property G4]{jolliffe1986principal}}]\label{app:pca.prop.distortion}
Let \(\tilde{\matA} = [\mata_1 ;...;\mata_m] \) be a  \(m \xx p\) matrix  of \(m\), \(p\)-dimensional observations. Define \(\matA\) to be the \(m \xx p\) matrix whose \(i^{th}\) row is \(\mata_i - \bar{\mata} \) where \(\bar{\mata} = \frac{1}{m} \sum_{j=1}^{m} \mata_j\) and consider the matrix \(\matA \matAt \in \RR^{m \xx m} \). 
The \(i^{th}\) diagonal element of \(\matA \matAt\) is the squared Euclidean distance of the sample \(\mata_i\) from the point \(\bar{\mata}\) that is the center of gravity of the points \(\mata_1 ... \mata_m\). 
Also, the \(i,j\) entry of \(\matA \matAt\), given by $\langle \mata_i - \bar{\mata}, \mata_j - \bar{\mata} \rangle$, 
is the cosine of the angle between the lines joining the points \(\mata_j\) and \(\mata_i\) to \(\bar{\mata}\), scaled by the distances of \(\mata_j\) and \(\mata_i\) from \(\bar{\mata}\).  Suppose that \(\mata_1 ... \mata_m\) are projected to a \(q\)-dimensional subspace using a linear orthogonal transformation \(\y_i = \matBt \mata_i\), and let \(\matY \coloneqq [\y_1 - \bar{\y} ; ... ; \y_m - \bar{\y}] \in \RR^{m \xx q}\) for \(\bar{\y} = m^{-1} \sum_{j=1}^{m} \y_j \in \RR^{q}\).
Then the choice \(\matB = \matW_q\) minimizes the distortion \(\FNormS{\matY \matYt - \matA \matAt }\)
\end{property}

	    \paragraph{Singular Value Decomposition.}
	    One practical way of describing and computing PCA for a matrix \(\tilde{\matA} \in \RR^{m \xx p}\), is using the singular value decomposition of its column centered form. Let again \(\matA\) denote the column centered form of \(\tilde{\matA}\),. and let
$
    \matA = \matU \Sigma \matVt
$
be its Singular Value Decomposition (SVD). That is, $\matU\in\RR^{m \xx m}$ and  $\matV\in\RR^{p \xx p}$ are  orthonormal matrices, and \(\Sigma \in \RR^{m \xx p}\) is a matrix with non-negative elements \(\sigma_1 \geq \sigma_2 \geq ...\) on its main diagonal and zeros elsewhere. 
The sample variance-covariance matrix $\matS$ can rewritten as \(\matS = (m-1)^{-1}  \matV \Sigma^{\T}\Sigma \matVt\), thus the columns of \(\matV\) are the eigenvectors of \(\matS\) and, according to~\Cref{app:pca.prop.spectral}, the coefficients in the PCA decomposition of \(\tilde{\matA}\).

Let us denote the rank-\(r\) truncated SVD of \(\matA\) by 
$
    \matA_r = \matU_r \Sigma_r \matVt_r
$
where \(\matU\) and \(\matV\) are the matrices obtained be taking the first \(r\) columns of \(\matU\) and \(\matV\) respectively, and \(\Sigma_r = \textnormal{diag}(\sigma_1, ..., \sigma_r)\). Suppose that the rank of \(\matA\) is greater or equal to \(r\), then the rank of \(\matA_r\) is exactly \(r\). 
This truncation of the SVD enjoys many algebraic properties, one such major result is the following Eckart-Young-Mirsky Theorem, stating that the \(r\)-rank truncated SVD of a matrix is in a sense the best approximate of rank lower or equal to \(r\): 
\begin{equation*}
    \matA_r = \argmin_{\tilde{\matA}, \textnormal{rank}(\tilde{\matA}) \leq r} \FNormS{\matA - \tilde{\matA}}
\end{equation*}
This result directly implies~\Cref{app:pca.prop.max.var,app:pca.prop.distortion}  of the PCA that are concerned with the maximization of the variance and minimizing the distortion of the projected configuration. 
Thus, we see that the SVD can be used not only as an alternative construction algorithm for the PCA, but also, due to Eckart-Young Theorem, can serve as the mathematical justification to some of the PCA's key properties.

    \section{Tensors component analysis}
        
A recent work by Kilmer et. al. introduced a tensor version of the Eckart-Young Theorem~\cite{Kilmer}, stating that the truncated tensor SVD is the best low rank approximate within a specified tensor-tensor product framework based on the \(\Mprod\)-product.
The \(\Mprod\)-product operation is defined by an invertible matrix \(\matM\), and the best low rank approximation results from~\cite{Kilmer} were established for matrices \(\matM\) that are nonzero  multiple of a unitary matrix (\(\matM^\T \matM = \matM \matM^{\T} = c^2 \matI_{n}\) for some constant \(c \neq 0\)). In this work, we only consider \(\matM\) that are unitary matrices.

We construct the \tcam on top the tensor SVD introduced by Kilmer et. al., and utilize Kilmer's Eckart-Young-like result to establish tensor analogs of ~\Cref{app:pca.prop.max.var,app:pca.prop.distortion}, in the same way that matrix SVD can be used for deriving the same properties for matrices. 
As a prelude to our discussion, we present some key notions and operations necessary for our derivation. Elaborated introduction and discussion can be found in ~\cite{Kilmer}.

\subsection{Tensor-tensor \texorpdfstring{\(\Mprod\)}--product framework}\label[suppmethod]{subseq:ttprod} 
We begin with some definitions. Let \(\matM\) be an \(n\xx n\) orthogonal matrix (\(\matM \matMt = \mat{I}_n = \matMt \matM \)), and a tensor \(\tA \in \RR^{\mpn}\). We define the {\bf domain transform} specified by \(\matM\) as $\thA \coloneqq \tA \tsM$, where \(\tsM\) denotes the tensor-matrix multiplication of applying \(\matM\) to each of the tensor \(n\) dimensional tubal fibers (\(\tA_{i,j,:}\)). 
The {\bf transpose} of a real \(\mpn\) tensor \(\tA \) with respect to \(\matM\), denoted by \(\tA^{\T}\), is a \(\pmn\) tensor for which $[\widehat{\tA^{\T}}]_{:,:,i} = [\thA^{\T}]_{:,:,i} = {[\thA]_{:,:,i}}^{\T}$.
Given two tensors \(\tA \in \RR^{\mpn}\) and  \(\tB \in \RR^{p \xx l \xx n}\), the facewise tensor-tensor product of \(\tA\) and \(\tB\), denoted by \(\tA \vartriangle \tB\),  is the \(m \xx l \xx n\) tensor for which $[\tA \vartriangle \tB]_{:,:,i} = \tA_{:,:,i} \tB_{:,:,i}$.

We can now define the product The tensor-tensor {\bf \(\Mprod\)-product} of \(\tA \in \RR^{\mpn}\) and  \(\tB \in \RR^{p \xx l \xx n}\) is defined by $\tA \Mprod \tB \coloneqq (\thA \vartriangle \thB) \tsMinv \in \RR^{m \xx l \xx n}$.
A few definitions now naturally follow. 
The  \(p \xx p \xx n\) {\bf identity tensor} with respect to \(\Mprod\), is the tensor \(\tI\) such that for any tensor \(\tE \in \RR^{p \xx p \xx n} \) it holds that $\tI \Mprod \tE = \tE = \tE \Mprod \tI $.
In situations where the dimensions are unclear from context, we use \(\tI_m\) to denote the \(m \xx m \xx n\) identity tensor.
Two tensors \(\tA, \tB \in \RR^{1 \xx m \xx n} \) are called \(\Mprod\){\bf -orthogonal slices} if $\tA^{\T} \Mprod \tB = \mathbf{0}$,  where \(\mathbf{0} \in \RR^{1\xx 1 \xx n} \) is the zero tube fiber, while \(\tQ \in \RR^{m \xx m \xx n}\) is called \(\Mprod\){\bf-unitary} if $\tQ^{\T} \Mprod \tQ = \tI = \tQ \Mprod \tQ^{\T}$.

The following definition is new, and is important for stating the PCA-like properties of \tcam.
\begin{definition}
A tensor \(\tB \in \RR^{p \xx k \xx n}\) is said to be a {\bf pseudo $\Mprod$-unitary tensor} (or {\bf pseudo $\Mprod$-orthogonal}) if \(\tB^{\T} \Mprod \tB\) is f-diagonal (i.e., all frontal slices are diagonal), and all frontal slices of \((\tB^{\T} \Mprod \tB) \tsM \) are diagonal matrices with entries that are either ones or zeros.
\end{definition}

\subsection{The \tsvdm}\label[suppmethod]{app:tsvdm.subsubseq}
With the \(\Mprod\)-product framework set-up, it is now possible to introduce the tensor singular value decomposition (\tsvdm). Let \(\tA \in \RR^{\mpn}\)  be a real tensor, then is possible to write the  full {\bf tubal singular value decomposition} of \(\tA\)  as  $\tA = \tU \Mprod \tS \Mprod \tV^{\T}$, 
where \(\tU, \tV\) are \(m \xx m \xx n\) and \(p \xx p \xx n\) \muni tensors respectively, and \(\tS \in \RR^{\mpn}\) is an {\bf f-diagonal} tensor, that is, a tensor whose frontal slices (\(\tS_{:,:,i}\)) are matrices with zeros outside their main diagonal (see ~\cite{Kilmer} for additional details).
We use the notation $\hsigma_{j}^{(i)}$ do denote the $j^{th}$ largest singular value on the $i^{th}$ lateral face of $\thS$: $\hsigma_{j}^{(i)} \coloneqq \thS_{j,j,i}$.

The \tsvdm construction makes it possible to expand the concept of tensor-rank discussed earlier. The {\bf t-rank} of \(\tA\) is the number of nonezero tubes of \(\tS\): $r = | \left\{ i = 1, \dots, n ~;~ \FNormS{\tS_{i,i,:}} > 0 \right\} |$. 
Additionally, the {\bf multi-rank} of \(\tA\) under \(\Mprod\), denoted by the vector \(\rrho \in \mathbb{N}^{n}\) whose \(i^{th}\) entry is $\rrho_i = \rnk (\thA_{:,:,i})$,
and the {\bf implicit rank} under \(\Mprod\) of a tensor \(\tA\) with multi-rank \(\rrho\) under \(\Mprod\) is $r = \sum_{i=1}^{n} \rrho_i$.


The definitions of tensor t-rank and multi-rank under $\Mprod$ also make it possible to define \tsvdm rank truncation with respect to these ranks. The tensor
$\tA^{(q)} = \tU_{:,1:q, :} \Mprod \tS_{1:q,1:q,:} \Mprod {\tV_{:,1:q,:}}^{\T}$ denotes the {\bf t-rank $q$ truncation} of $\tA$ under $\Mprod$, while multi-rank $\rrho$ truncation of $\tA$ under $\Mprod$ is given by the tensor $\tA_{\rrho}$ for which $\widehat{\tA_{\rrho}}_{:,:,i} = \thU_{:,1:\rrho_i, i}  \thS_{1:\rrho_i,1:\rrho_i,i}  {\thV_{:,1:\rrho_i,i}}^{\T}$. Note that for t-rank truncation the $\tU$ and $\tV$ factors are $\Mprod$-orthogonal, while for multi-rank truncation they are only pseudo $\Mprod$-orthogonal.


The Eckart-Young-like result obtained by Kilmer et. al. states that \(\tA^{(q)}\) and \(\tA_{\rrho}\) are the `best t/multi-rank \(q,\rrho\)' approximations of \(\tA\) respectively, where `best' refers to entrywise squared error, i.e. the Frobenius norm of the error. In other words,  \(\tA^{(qk)}\) and \(\tA_{\rrho}\) are the global minimizers of \(\FNormS{\tA - \tB}\) for  \(\tB\) with t-rank  \(q\) (respectively, multi-rank \(\rrho\)) under \(\Mprod\) of the same dimensions as \(\tA\).

Let \(\tA = \tU \Mprod \tS \Mprod \tV^{\T} \in \RR^{\mpn}\), 
we will use $j_1,\dots, j_{np}$ and $i_1,\dots, i_{np}$ to denote the indexes of the non-zeros of  \(\thS\) ordered in decreasing order. That is
\begin{equation}
    \hsigma_{\ell} \coloneqq \hsigma_{j_{\ell}}^{(i_{\ell})} \label{app:tsvdm.ordering.matr} 
\end{equation}
where $\hsigma_1 \geq \hsigma_2 \geq \dots \geq \hsigma_{np}$.

In this work, we consider truncation with respect to the explicitly given implicit rank under $\Mprod$.
For \(q = 1 , \dots , p n\), the {\bf explicit rank-\(q\) truncation} under \(\Mprod\)  of \(\tA\) is the tensor \(\tA_{\rrho}\) of multi-rank \(\rrho\) under \(\Mprod\)  where 
\begin{equation}\label{app:tsvdm.explicitrank.rho.entry}
    \rrho_i = \max \{ j = 1, \dots ,p ~|~ (j,i) \in \{(j_1, j_1), \dots, (j_q, i_q)\} \} ~~.
\end{equation}
In words, we keep the $q$ top singular values of any frontal slice of $\thS$, and zero out the rest. 
Note that the explicit rank-\(q\) truncation is not always uniquely defined by \(q\) since ties between singular values result in multiple possible choices for the multi-rank \(\rrho\). 
However, all explicit-rank \(q\) truncations are equivalent in that they produce identical reconstruction errors.
Indeed, let \(\tA_{\rrho}\) be a multi-rank \(\rrho\) truncation of \(\tA\), implied by target explicit rank \(q\), then $\FNormS{\tA - \tA_{\rrho}} = \sum_{{\ell = q+1}}^{p \cdot n} \hsigma_{\ell}^{2} $,
meaning that the reconstruction error remains the same for any choice of multi-rank \(\rrho\) in~\cref{app:tsvdm.explicitrank.rho.entry}. 
Furthermore, given the definition of explicit rank truncation, we get the following.
\begin{claim}
Suppose that $\matM$ is a unitary matrix. Let $\tA \in \RR^{\mpn}$, with a full \tsvdm $\tA = \tU \mm \tS \mm \tVt$, and $\rrho = [\rrho_1 , \dots ,\rrho_n ]$ be the multi rank defined by explicit rank-$q$ truncation of $\tA$ in~\cref{app:tsvdm.explicitrank.rho.entry}.
Let $\tA_{\rrho} = \tU_{\rrho} \mm \tS_{\rrho} \mm \tVt_{\rrho} \in \RR^{\mpn}$, the multi-rank $\rrho$ truncation of $\tA$, then $\tA_{\rrho}$ is the best implicit rank-$q$ approximation of $\tA$. 
\end{claim}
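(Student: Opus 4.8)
The plan is a matching lower-bound/achievability argument. The excerpt already records that $\FNormS{\tA - \tA_{\rrho}} = \sum_{\ell=q+1}^{pn}\hsigma_{\ell}^{2}$ for the multi-rank $\rrho$ coming from~\cref{app:tsvdm.explicitrank.rho.entry}, and (verified below) $\tA_{\rrho}$ has implicit rank at most $q$; so it suffices to prove that $\sum_{\ell=q+1}^{pn}\hsigma_{\ell}^{2}$ is a lower bound for $\FNormS{\tA - \tB}$ over \emph{all} $\tB \in \RR^{\mpn}$ whose implicit rank under $\Mprod$ is at most $q$. Then $\tA_{\rrho}$ attains the optimum and is a best implicit rank-$q$ approximation.

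First I would pass to the transform domain and work facewise. Since $\matM$ is unitary, the domain transform is linear and Frobenius-norm preserving, so $\FNormS{\tA - \tB} = \FNormS{\thA - \thB} = \sum_{i=1}^{n}\FNormS{\thA_{:,:,i} - \thB_{:,:,i}}$; moreover the $\Mprod$-product is facewise in the transform domain, so $\thA_{:,:,i} = \thU_{:,:,i}\,\thS_{:,:,i}\,\thV_{:,:,i}^{\T}$ is a matrix SVD of $\thA_{:,:,i}$ with singular values $\hsigma_{1}^{(i)} \ge \hsigma_{2}^{(i)} \ge \cdots$. Writing $\rho'_i = \rnk(\thB_{:,:,i})$, so that $\sum_i \rho'_i \le q$, the matrix Eckart--Young theorem applied to each frontal slice gives $\FNormS{\thA_{:,:,i} - \thB_{:,:,i}} \ge \sum_{j > \rho'_i}(\hsigma_{j}^{(i)})^{2}$, hence $\FNormS{\tA - \tB} \ge \sum_{i=1}^{n}\sum_{j > \rho'_i}(\hsigma_{j}^{(i)})^{2} = \FNormS{\tA} - \sum_{i=1}^{n}\sum_{j \le \rho'_i}(\hsigma_{j}^{(i)})^{2}$.

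Next I would finish with a rearrangement bound. The subtracted term is a sum of $\sum_i \rho'_i \le q$ of the quantities $(\hsigma_{j}^{(i)})^{2}$, so it is at most $\sum_{\ell=1}^{q}\hsigma_{\ell}^{2}$, the sum of the $q$ largest of them in the global ordering $\hsigma_1 \ge \cdots \ge \hsigma_{np}$. Therefore $\FNormS{\tA - \tB} \ge \FNormS{\tA} - \sum_{\ell=1}^{q}\hsigma_{\ell}^{2} = \sum_{\ell=q+1}^{pn}\hsigma_{\ell}^{2} = \FNormS{\tA - \tA_{\rrho}}$. It then remains to note that $\tA_{\rrho}$ is feasible: since the singular values decrease within each frontal slice, the $q$ globally-largest positions meet slice $i$ in an initial segment $\{1,\dots,\rho_i\}$ with $\rho_i$ exactly as in~\cref{app:tsvdm.explicitrank.rho.entry} and $\sum_i \rho_i = q$, so the implicit rank of $\tA_{\rrho}$, namely $\sum_i \rnk((\widehat{\tA_{\rrho}})_{:,:,i}) \le \sum_i \rho_i$, is at most $q$. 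Combining, $\tA_{\rrho}$ achieves the minimal reconstruction error among implicit rank-$q$ tensors, which is the assertion.

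The genuinely delicate point — the main obstacle — is that last verification: that the per-slice ``keep a prefix'' structure of the \tsvdm truncation is compatible with selecting the global top $q$ singular values, and that the resulting per-slice counts are precisely the $\rho_i$ of~\cref{app:tsvdm.explicitrank.rho.entry}. This needs care with ties among singular values (so that a valid choice of the top-$q$ set forming prefixes always exists) and with the degenerate case $\hsigma_q = 0$ (where the implicit rank of $\tA_{\rrho}$ drops below $q$, which only strengthens the claim); here one leans on the fact, already observed in the excerpt, that every explicit rank-$q$ truncation yields the same reconstruction error. The remaining ingredients — norm invariance under the domain transform, the facewise action of $\Mprod$, and matrix Eckart--Young on each frontal slice — are routine.
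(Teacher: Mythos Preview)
Your argument is correct and follows essentially the same route as the paper's proof: reduce to the transform domain, compare the retained singular values to the global top $q$, and conclude via a rearrangement bound that $\sum_{\ell=q+1}^{pn}\hsigma_{\ell}^{2}$ is both achieved by $\tA_{\rrho}$ and a lower bound for all competitors. The only cosmetic difference is that the paper first invokes the Kilmer et~al.\ multi-rank Eckart--Young result to reduce to competitors of the form $\tA_{\pphi}$ with $\sum_i \pphi_i = q$, whereas you bypass that citation by applying the ordinary matrix Eckart--Young theorem directly on each frontal slice; the resulting inequalities are identical.
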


\begin{proof}
Let $\pphi = [\pphi_1 ,\dots , \pphi_n]$ be a multi-rank such that $\sum_{i=1}^{n} \pphi_i = q$.
Then $\tA_{\pphi} = \tU_{\pphi} \mm \tS_{\pphi} \mm \tVt_{\pphi} \in \RR^{\mpn}$ is the best multi-rank $\pphi$ approximation of $\tA$. 
Let the tuples $(\tilde{j}_1, \tilde{i}_1), \dots, (\tilde{j}_q, \tilde{i}_q)$ denote an ordering of the singular values $\{ \hsigma_{j}^{(i)} \}$ for $i=1,\dots,n$ and $j = 1,\dots, \pphi_i$,  such that $\hsigma_{\tilde{1}} \geq \hsigma_{\tilde{2}} \geq \dots \geq \hsigma_{\tilde{q}} $ where $\hsigma_{\tilde{\ell}} = \hsigma_{\tilde{j}_{\ell}}^{(\tilde{i}_{\ell})}$.
Then, by construction of $\hsigma_{\ell}$ in~\cref{app:tsvdm.ordering.matr}, we have that $\hsigma_{\ell} \geq \hsigma_{\tilde{\ell}}$ for all $\ell = 1,\dots,q$, hence
$
\sum_{h=1}^{q}  \hsigma_{\tilde{h}}^2 \leq \sum_{h=1}^{q}  \hsigma_{h}^2
$.

Now, we have that 
\begin{align*}
    \FNormS{\tA -\tA_{\pphi}} &= \sum_{i=1}^{n} \sum_{j=1}^{p} (\hsigma_{j}^{(i)})^2  - \sum_{i=1}^{n} \sum_{j=1}^{\pphi_i} (\hsigma_{j}^{(i)})^2 \\
    &= \sum_{h=1}^{pn} \hsigma_{h}^2  - \sum_{h=1}^{q} \hsigma_{\tilde{h}}^2 \\ 
    &\geq \sum_{h=1}^{pn} \hsigma_{h}^2  - \sum_{h=1}^{q} \hsigma_{h}^2 \\
    &= \FNormS{\tA -\tA_{\rrho}}
\end{align*}

Thus, we established that the best implicit rank $q$ approximation of $\tA$, is the tensor $\tA_{\rrho}$ with multi-rank $\rrho$ implied by explicit rank $q$ truncation of $\tA$.
\end{proof}

We also note that for the explicit rank-$q$ truncation of $\tA$, it holds that
\begin{equation}\label{app:tsvdm.erank.k.norm}
    \FNormS{\tA_{\rrho}} = \sum_{h=1}^{q} \hsigma_h^2 
\end{equation}
where $\hsigma_h$ is defined by~\cref{app:tsvdm.ordering.matr}.



\subsection{A PCA-like tensor decomposition}\label[suppmethod]{app:tensor.pcalike.subsec}
The following construction was first presented in~\cite{Hao2013}. Short presentation of the objects we are concerned with are be followed by novel results regarding their algebraic and geometric properties. For extended discussion, see~\cite{Hao2013,Kilmer}

Let \(\tA = \tU \Mprod \tS \Mprod \tV^{\T} \in \RR^{\mpn}\) in mean deviation form (see~\cref{sec:online.methods} section), and consider the following expressions~\cite{Hao2013}:
\begin{equation} \label{app:pca.complete.def.tensor}
    \tZ = \tA \Mprod \tV
\end{equation}
and 
\begin{equation} \label{app:pca.trunc.def.tensor}
    \tZ_{\rrho} \coloneqq \tA \Mprod \tV_{\rrho}
\end{equation}
where the tensor \(\tV_{\rrho} \in \RR^{p \xx p \xx n}\) is obtained from explicit rank $q$ truncated \tsvdm of $\tA$. 
The expressions in \cref{app:pca.trunc.def.tensor} makes use in the \tsvdm to form a construction of high resemblance to the truncated PCA shown in~\cref{app:pca.trunc.def}. 
We also point that $\tV_{\rrho}$ is a member of particular family - it is a \pmorth tensor of implicit rank $q$.

Let $\tB \in \RR^{p \xx p \xx n}$ be a \pmorth tensor of implicit rank $q$, and notice that $(\tBt \Mprod \tB) \tsM$ is an f-diagonal tensor with diagonals consists of zeros and ones, and for which the number of non-zero entries is $q$.
Define
\begin{equation}\label{app:tensor.B.transform}
    \tY = \tA \Mprod \tB  
\end{equation}
In~\Cref{app:pca.prop.max.var} the formulation is concerned with maximizing (or minimizing) the variance component of a random variable \(\y\), that is quantified using the trace of the variance-covariance matrix of that random variable. 
The objective of maximizing (minimizing) the sample variance of $\matY$, that is proportional to $\trace{\matBt \matAt \matA \matB}$, which, by definition of the Frobenius norm, results in $\FNormS{\matA \matB}$.
In absence of analog definitions for variance of a tensor valued random variable and the trace of a tensor, we resort to discuss an algebraically similar trait.

Define the sample variance-covariance  tensor~\cite{Hao2013} for $\tY$ as 
\begin{align} \label{app:tca.reg.energy}
    \tE_{\tY} \coloneqq \tB^{\T} \Mprod \tE \Mprod \tB
\end{align}
where $\tE = (m-1)^{-1}  \tA^{\T} \Mprod \tA \in \RR^{p \xx p \xx n}$ is the sample variance-covariance  tensor for $\tA$ (and $\tZ$).
\begin{property}\label{app:tca.prop.max.min.var}
    Suppose that $\matM$ is a unitary matrix. Let \(\tB \in \RR^{p \xx p \xx n} \) a \pmorth tensor of implicit rank \(q\). 
    Then the quantity \(\FNormS{\tA \Mprod \tB} \) is maximized when \(\tB = \tV_{\rrho}\), where $\tA_{\rrho} = \tU_{\rrho} \mm \tS_{\rrho} \mm \tVt_{\rrho}$ is the explicit rank $q$ truncation of $\tA$. 
\end{property}

\begin{proof}[Proof of \Cref{app:tca.prop.max.min.var}]\label{app:tca.proof.prop.max.min.var}
    We have that $\FNormS{\tA \Mprod \tB} = \FNormS{\thA \vartriangle \thB} = \sum_{i=1}^{n} \trace{\thBt_{:,:,i} \thA^{\T}_{:,:,i} \thA_{:,:,i} \thB_{:,:,i}}$. 
    Recall that \(\tA = \tU \Mprod \tS \Mprod \tVt\), and write $\tB = \tV \Mprod \tC $ where $\tC = \tVt \Mprod \tB$. 
    Note that $\tC$ is also an implicit rank $q$ \pmorth tensor.
    So, $\FNormS{\tA \Mprod \tB} = \sum_{i=1}^{n} \trace{ \thCt_{:,:,i} \thS^{\T}_{:,:,i} \thS_{:,:,i} \thC_{:,:,i}} = \FNormS{\tS \Mprod \tC} $. 
    Thus $\FNormS{\tA \Mprod \tB} = \sum_{i=1}^{n} \sum_{j=1}^{p} (\hsigma_{j}^{(i)})^2  {c}_{j,i}$ where ${c}_{j,i} = \thC_{j,:,i} \thCt_{j,:,i}$, which may be re-ordered according to~\cref{app:tsvdm.ordering.matr} to obtain
    \begin{equation*}
        \FNormS{\tA \Mprod \tB} = \sum_{h=1}^{p n } \hsigma_{h}^{2} c_{j_h, i_h}
    \end{equation*}
    
    %

    Since \(0 \leq c_{j_h, i_h} \leq 1 \) and $\FNormS{\tC}=\sum_{j=1}^{p} {c}_{j,i}=q$, 
    we have that $\sum_{h=1}^{p n } \hsigma_{h}^{2} c_{j_h, i_h} \leq \sum_{h=1}^{q } \hsigma_{h}^{2} $.
    Hence, $\FNormS{\tA \Mprod \tB} \leq \sum_{h=1}^{q} \hsigma_{h}^{2}$, which according to~\cref{app:tsvdm.erank.k.norm},  equals to $\FNormS{\tA_{\rrho}}$ where  $\rrho$ is the multi-rank implied by explicit rank $q$ truncation of $\tA$ (See ~\cref{app:tsvdm.explicitrank.rho.entry}).
    
    Note that taking $\tB = \tV_{\rrho}$ yield  ${c}_{j,i} = ( (\thV_{:,j,i})^{\T} (\thV_{\rrho})_{:,:,i} ) ( (\thV_{:,j,i})^{\T} (\thV_{\rrho})_{:,:,i} )^{\T} $, that in turn results in
    \begin{equation*}
        {c}_{j,i} = \begin{cases}
        1 & \exists h \leq q \st  (j_h, i_h) = (j,i) \\
        0 & \text{otherwise}
        \end{cases}
    \end{equation*}
    thus, the upper bound is achieved for $\FNormS{\tA \Mprod \tV_{\rrho}} = \sum_{h=1}^{q} \hsigma_{h}^{2}$.
   
\end{proof}

The tensor equivalent result for~\Cref{app:pca.prop.distortion} says the minimizer of the distortion under pseudo-orthogonality constraints is again $\tV_{\rrho}$. 
\begin{property}\label{app:tca.prop.distortion}
    Suppose that $\matM$ is a unitary matrix. Let \(\tA \in \RR^{\mpn}\) in mean deviation form and define $\tY = \tA \mm \tB \in \RR^{\mpn}$  where \(\tB \in \RR^{p \xx p \xx n} \) is a \pmorth tensor of implicit rank \(q\).
    Then the distortion formed by \(\tB\), measured as $\FNormS{\tY \Mprod \tYt - \tA \Mprod \tAt}$, is minimized when \(\tB = \tV_{\rrho}\).
\end{property}
\newcommand{\ttheta}{\boldsymbol{\theta}}
\begin{proof}[Proof of \Cref{app:tca.prop.distortion}]
    Note that  \( \thA_{:,:,i}  \thA_{:,:,i}^{\T}\) is symmetric positive semidefinite, thus, \(\thA_{:,:,i}  \thA_{:,:,i}^{\T} = \sum_{j=1}^{\ttheta_{i}}  (\hsigma_j^{(i)})^{2} \u_j^{i} (\u_j^{i})^{\T} \) where  \(\{\u_j^{i}\}_{j=1}^{\ttheta_{i}}\) are unit normalized orthogonal vectors, \((\hsigma_1^{(i)})^{2} \geq  (\hsigma_2^{(i)})^{2} \geq \dots \geq (\hsigma_{\ttheta_{i}}^{(i)})^{2} \) are the eigenvalues of \(\thA_{:,:,i}  \thA_{:,:,i}^{\T}\), and $\ttheta \coloneqq [\ttheta_1 , \dots , \ttheta_{n}]$ is the multi-rank of $\tA$ under $\mm $.
    Similarly, we have that \(\thY_{:,:,i} \thY_{:,:,i}^{\T} = \sum_{j=1}^{\pphi_{i}} \mu_j^{i} \p_j^{i} (\p_j^{i})^{\T} \)  where  \(\{\p_j^{i}\}_{j=1}^{\pphi_{i}}\) are unit normalized orthogonal vectors,  \(\mu_1^{i} \geq  \mu_2^{i} \geq \dots \geq \mu_{\pphi_{i}}^{i}\) are the eigenvalues of \(\thY_{:,:,i} \thY_{:,:,i}^{\T} \), and $\pphi_{i}$ denote the rank of $ \thY_{:,:,i} \thY_{:,:,i}^{\T} $.
    
    Note that  \(\thY_{:,:,i} \thY_{:,:,i}^{\T} = \sum_{j=1}^{\ttheta_{i}} \mu_j^{i}  \u_j^{i} \thB_{:,:,i} \thBt_{:,:,i} (\u_j^{i})^{\T} \), therefore, it must hold that  $\pphi_i \leq \ttheta_i$ for all $i=1, \dots ,n$, meaning that $\sum_{j= 1}^{\pphi_{i}}  (\hsigma_j^{(i)})^{2} \u_j^{i} (\u_j^{i})^{\T} $ is the best rank $\pphi_i$ approximation of $\thA_{:,:,i}  \thA_{:,:,i}^{\T}$.
    As a result, we have $\FNormS{\thY_{:,:,i} \thY_{:,:,i}^{\T} - \thA_{:,:,i}  \thA_{:,:,i}^{\T}} \geq  \sum_{j=\pphi_i + 1}^{\ttheta_{i}}  (\hsigma_j^{(i)})^{4} $ for all $i$. 
    Moreover, since $\tB$ is of implicit rank $q$, it holds that  $\sum_{i=1}^{n} \pphi_i \leq q$.
    Combining the last two, we have
    \begin{align*}
        \FNormS{\tY \mm \tY^{\T} - \tA \mm \tA^{\T}} 
        &\geq  \sum_{i=1}^{n} \sum_{j=1}^{\ttheta_{i}}  (\hsigma_j^{(i)})^{4} - \sum_{i=1}^{n} \sum_{j=  1}^{\pphi_i}  (\hsigma_j^{(i)})^{4} \\
        &\geq \sum_{h=1}^{np} (\hsigma_h)^{4} - \sum_{h=1}^{q}   (\hsigma_{h})^{4}
    \end{align*}
    where \(\hsigma_h \coloneqq \hsigma_{j_h,i_h}\) (See~\cref{app:tsvdm.ordering.matr}). 
    Simple calculations reveals that setting  \(\tB  = \tV_{\rrho}\) results in exactly 
    \begin{equation*}
        \FNormS{\tA \Mprod \tV_{\rrho} \Mprod \tV_{\rrho}^{\T} \Mprod \tAt - \tA \Mprod \tA^{\T}} = \sum_{h=1}^{np} (\hsigma_h)^{4}- \sum_{h=1}^{q} (\hsigma_{h})^{4}
    \end{equation*}
    so we conclude that \(\tV_{\rrho}\) is a global minimizer of the distortion as it reaches the global lower bound.
\end{proof}

\newcommand{\ovec}{\operatorname{vec}}
\subsection{Explicit rank truncated tensors in vector representation}\label[suppmethod]{app:sec.flattened.truncations}
Having established optimality properties with respect to variance and distortion for explicit rank truncated tensors (\Cref{app:tca.prop.max.min.var,app:tca.prop.distortion}), we turn to do the same for vector representation of these truncations.
The vector form of truncated tensors is obtained by mode-1 unfolding.
Let $\tX \in \RR^{1 \xx p \xx n}$, we define $\ovec ( \tX ) \in \RR^{pn}$ as 
\[
\ovec ( \tX ) = [\tX_{1,1,1}, \tX_{1,2,1}, \dots, \tX_{1,p,1},\tX_{1,1,2}, \tX_{1,2,2}, \dots , \tX_{1,p,n} ]~~.
\]
For tensors $\tX \in \RR^{\mpn}$, we slightly abuse the above notation by letting $\ovec(\tX) \in \RR^{m \xx np}$ denote the matrix whose $\ell^{th}$ row is $\ovec ( \tX_{\ell,:,:} )$ (so, even if $m=1$ we will view $\ovec(\tX)$ as a row vector).

Consider the family ${\cal Q}$ of tensor-vector mappings $ \RR^{m \xx p \xx n}$ to $\RR^{m \xx pn}$ whose members are of the form
\begin{equation*}
    \qb (\tX) = \ovec ( (\tX \mm \tB) \tsM )
\end{equation*}
where $\tB \in \RR^{ p \xx p \xx n}$ is a \pmorth tensor. 
We define the {\bf rank} of a member $\qb$ in ${\cal Q}$ as the implicit rank of $\tB$.

In analogy to~\Cref{app:tensor.pcalike.subsec,app:matrix.pca.sec} we present the following properties.
\begin{property}\label{app:vec.prop.max.var}
Suppose that $\matM$ is a unitary matrix. Let $\tA \in \RR^{\mpn}$ be in mean deviation form and $\matY \coloneqq \qb (\tA) \in \RR^{m \xx np}$ for some \pmorth $\tB$. 
As in~\Cref{app:matrix.pca.sec}, we define $\matS_{\matY} \coloneqq (m-1)^{-1 }\matYt \matY $ the sample variance-covariance matrix of $\matY$. 
Then, the rank $q$ member of ${\cal Q}$ for which the variance component $(m-1)^{-1}\trace{\matYt \matY}$ is maximized, is given by $Q_{\tV_{\rrho}}$ where $\tA_{\rrho} = \tU_{\rrho} \mm \tS_{\rrho} \mm \tVt_{\rrho}$ is the multi-rank $\rrho$ truncation $\tA$ under $\mm$ implied by explicit rank $q$ truncation of $\tA$. 
\end{property}

\begin{proof}[Proof of \Cref{app:vec.prop.max.var}]
\label{app:proof.vec.prop.max.var}
    Note that 
    \begin{align*}
        \trace{\matS_{\matY}} &\propto
        \trace{\matY \matYt}\\
        &=\trace{\sum_{h=1}^{np} \matY_{:,h} (\matY_{:,h})^{\T}}  \\
        &=\sum_{h=1}^{np} \trace{\matY_{:,h} (\matY_{:,h})^{\T}}  \\
        &= \sum_{i=1}^{n} \sum_{j=1}^{p} \trace{\thY_{:,j,i} (\thY_{:,j,i})^{\T}} \\ 
        &= \sum_{i=1}^{n} \sum_{j=1}^{p} (\thY_{:,j,i})^{\T} \thY_{:,j,i}   \\ 
        &= \sum_{i=1}^{n} \FNormS{ \thY_{:,:,i} } = \FNormS{ \tY } 
    \end{align*}
    where $\tY = \tA \mm \tB$.
    So we conclude that $\trace{\matS_{\matY}} = (m-1)^{-1} \FNormS{ \tA \mm \tB }$ Now, by~\Cref{app:tca.prop.max.min.var}, we have that the implicit rank $q$ for which this quantity is maximal is $\tV_{\rrho}$.
\end{proof}

\Cref{app:vec.prop.max.var} provides a variance maximization result for flattened explicit rank $q$ truncations that is similar to~\Cref{app:pca.prop.max.var,app:tca.prop.max.min.var} which state variance maximization for traditional rank $q$ truncations of matrices (PCA) and explicit rank-$q$ truncations of tensors respectively.

We proceed with presenting a slightly modified version of ~\Cref{app:tca.prop.distortion,app:pca.prop.distortion}. 
Recall that ~\Cref{app:tca.prop.distortion,app:pca.prop.distortion} discuss the minimization of the Frobenius norm of the difference between configuration matrices 
(or tensors) of the truncated and of the complete representation ; 
\begin{align*}
    \FNormS{\matY  \matYt - \matA \matAt} &,& \FNormS{\tY \mm \tYt - \tA \mm \tAt}
\end{align*}

Formulating analog property for flattened rank $q$ truncated representations, requires a definition of the complete flattened representation of a tensor.
Given $\tA \in \RR^{\mpn}$ in mean deviation form, we let 
\begin{equation}\label{app:tca.flattened.full}
    \matA \coloneqq \qv(\tA) = \ovec(\thZ) 
\end{equation}
denote the complete flattened representation of $\tA$, where $\tA = \tU \mm \tS \mm \tVt = \tZ \mm \tVt $. 
We argue that the definition of the complete flattened representation stated above is rather natural since it holds all information regarding the original tensor $\tA$, and reconstructing $\tA$ from $\matA$ is possible by applying an inverse $\ovec$ operation to $\matA$, followed by a $\xx_{3} \matM^{-1}$ tensor-matrix product, and a right $\mm$ product with $\tVt$.
Additional argument in favor of the choice in~\cref{app:tca.flattened.full}, is that the sample variance component in $\matS_{\matA}$ is identical to that of the original tensor, as could be deduced from the proof of~\Cref{app:vec.prop.max.var}.

Therefore, we will use $\matA$ (\cref{app:tca.flattened.full}) as the reference point for quantifying the distortion in configuration of flattened truncated rank representations of $\tA$.
Note that in contrast to the fact that the expressions $\FNormS{\tY}$ and $\FNormS{\matY}$ are identical (a fact that was used to show~\Cref{app:vec.prop.max.var}), the quantities $\FNormS{\tY \mm \tYt}$ and $\FNormS{\matY \matYt}$ are not equal in general.
This forces a slight modification in the definition of the concept of distortion itself.
Instead of using the traditional formulation $\FNormS{\matY \matYt - \matA \matAt}$ as in~\cite{jolliffe1986principal}, we replace the Frobenius norm with the in nuclear norm.
Given a matrix $\matX$, the nuclear norm of $\matX$, denoted by $\NNorm{\matX}$, is defined as the sum of $\matX$'s singular values.
In the special case where $\matX$ as a positive semi-definite matrix (a case which we briefly describe as $\matX \succeq 0$), we have that $\NNorm{\matX} = \trace{\matX}$.
We now show the following.
\begin{property}\label{app:vec.prop.distortion}
Suppose that $\matM$ is a unitary matrix. Let $\tA \in \RR^{\mpn}$ be in mean deviation form, $\matY \coloneqq \qb (\tA) \in \RR^{m \xx np}$ and $\matA \in \RR^{m \xx pn}$ defined by~\cref{app:tca.flattened.full}.
Then, the rank $q$ member of ${\cal Q}$ for which the distortion $\NNorm{\matY \matYt - \matA \matAt}$ is minimized, is given by $Q_{\tV_{\rrho}}$ where $\tA_{\rrho} = \tU_{\rrho} \mm \tS_{\rrho} \mm \tVt_{\rrho}$ is the multi-rank $\rrho$ truncation $\tA$ under $\mm$ implied by explicit rank $q$ truncation of $\tA$. 
\end{property}

\begin{proof}[Proof of \Cref{app:vec.prop.distortion}]
    Begin by noting that for any $\ell$ and $k$ between $1$ and $m$, we have 
    \begin{align*}
        [\matY \matYt]_{\ell, k} &= \ovec (\thY_{\ell,:,:}) \ovec (\thY_{k,:,:})^{\T} \\
        &= \sum_{i=1}^{n}  \thY_{\ell,:,i} \thY_{k,:,i}^{\T} \\
        &= (\sum_{i=1}^{n}  \thY_{:,:,i} \thY_{:,:,i}^{\T})_{\ell,k} 
    \end{align*}
    and similarly, we get $[\matA \matAt]_{\ell, k} =  (\sum_{i=1}^{n}  \thZ_{:,:,i} \thZ_{:,:,i}^{\T})_{\ell,k} $, 
    thus, $\matY \matYt - \matA \matAt = \sum_{i=1}^{n}  \thY_{:,:,i} \thY_{:,:,i}^{\T}  - \thZ_{:,:,i} \thZ_{:,:,i}^{\T} $.
    \noindent
    We write $\tQ = \tVt \mm \tB$, and note that $\tQ$ is a \pmorth tensor of implicit rank $q$, then
    \begin{align*}
        \thY_{:,:,i} \thY_{:,:,i}^{\T} - \thZ_{:,:,i} \thZ_{:,:,i}^{\T} &= \thZ_{:,:,i} (\thQ_{:,:,i} \thQt_{:,:,i} - \matI)  \thZ_{:,:,i}^{\T}
    \end{align*}
    where $\thQ_{:,:,i} \thQt_{:,:,i} - \matI$ is a $p \xx p $ symmetric matrix. 
    Since $\matI - \thQ_{:,:,i} \thQt_{:,:,i} \succeq 0$
    (it is a projection matrix), we have that $\matA \matAt - \matY \matYt \succeq 0$, and hence 
    \[
    \NNorm{\matA \matAt - \matY \matYt} = \trace{\matA \matAt - \matY \matYt}~.
    \]
    Note that 
    \begin{align*}
        \trace{\matA \matAt - \matY \matYt} &= \trace{ \sum_{i=1}^{n}  \thZ_{:,:,i} \thZ_{:,:,i}^{\T} - \thY_{:,:,i} \thY_{:,:,i}^{\T}} \\
        &= \FNormS{\tZ} - \FNormS{\tY}
    \end{align*}
    a quantity that that is minimized when $\FNormS{\tY}$ is maximized. 
    Since $\FNormS{\tY} $ is proportional to the sample variance component of $\tY$, according to the~\hyperref[app:tca.proof.prop.max.min.var]{proof} of~\Cref{app:tca.prop.max.min.var},  $\NNorm{\matA \matAt - \matY \matYt}^2$ is minimized when $\matY = Q_{\tV_{\rrho}} ( \tA )$, that is, $\tB = \tV_{\rrho}$.
    
\end{proof}

\section{\tcam}
A closer inspection of $\tV_{\rrho}$ reveals that the flattened rank $q$ truncated representation $Q_{\tV_{\rrho}}$, which was shown in~\Cref{app:sec.flattened.truncations} to have optimality guarantees with respect to distortion and variance, can be considerably compressed.
Consider the $i^{th}$ frontal face of $\thV_{\rrho}$, then we have that 
$$
\FNormS{(\thV_{\rrho})_{:, j,i} }  = 
    \begin{cases}
    1 & \exists h \leq q \st (j_h,i_h) = (j,i) \\
    0 & \text{otherwise}
    \end{cases}
$$
where we used the ordering of frontal and lateral indices defined in~\cref{app:tsvdm.ordering.matr}.
Thus, for any tensor $\tX \in \RR^{1 \xx p \xx n }$ we have that $\thX_{:,:,i} (\thV_{\rrho})_{:,:,i} $ can obtain nonzero values only for indices $j=1,\dots,p$ such that $(j,i) = (j_h, i_h)$ for some $h \leq q$, essentially making  $Q_{\tV_{\rrho}}$ a mapping from $\RR^{1 \xx p \xx n}$ to $\RR^{q}$.

Given $\tA \in \RR^{\mpn}$ with explicit rank $q$ truncation $\tA_{\rrho} = \tU_{\rrho} \mm \tS_{\rrho} \mm \tVt_{\rrho}$,  we let $Q:\RR^{1 \xx p \xx n} \to \RR^{q}$ denote the compact representation of $Q_{\tV_{\rrho}}$;
$$
Q(\tX) = [x_1, x_2, ... , x_q]
$$
where $x_h = [(\tX \mm \tV_{\rrho}) \tsM]_{:,j_h,i_h}$. 

For any $\tX \in \RR^{k \xx p \xx n}$, let $\matX \coloneqq Q_{\tV_{\rrho}}(\tX)$ and $\tilde{\matX} \coloneqq Q(\tX)$.
Clearly, $\trace{\tilde{\matX}^{\T} \tilde{\matX}} = \trace{\matXt \matX} $ and also $\tilde{\matX} \tilde{\matX}^{\T} = \matX \matXt  $.
This gives rise to the following:
\begin{definition}[\tcam{}]
Let $\tA \in \RR^{\mpn}$ in mean deviation form. The \tcam{} of $\tA$ consists of the following
\begin{enumerate}
    \item The \tcam{} scores, given by the tensor-to-vector mapping $Q$ specified above. 
    Note that given a new sample $\tX \in \RR^{1 \xx p \xx n}$, the \tcam{} scores of $\tX$ may be easily obtained by applying $Q$ to $\tX$.
    \item The \tcam{} loadings (or coefficients) matrix $\mat{V} \in \RR^{np \xx p} $ with entries $\mat{V}_{h,j} = \thV_{j_h,j,i_h}$, where the index $h$ corresponds to the ordering of the singular values in~\cref{app:tsvdm.ordering.matr}.
\end{enumerate}

The \tcam{} may be truncated at any target number of factors $q$ by taking the first $q$ rows for the scores matrix, and the first $q$ rows in the coefficients matrix.

The scores of the  $q$ truncated \tcam{} enjoy the same properties as $Q_{\tV_{\rrho}} (\tA)$, hence, these scores make variance maximizing vector representation of samples in $\tA$ (\Cref{app:vec.prop.max.var}) while minimizing the distortion with respect to the nuclear norm (\Cref{app:vec.prop.distortion}).
\end{definition}

\end{appendices}

    \bibliographystyle{plain}
    \bibliography{bibfile}

\end{document}